\documentclass[11pt]{article}
\usepackage{mathrsfs} 
\usepackage[margin=1.25in]{geometry}
\usepackage{ifthen}
\usepackage{graphicx}
\usepackage{epsfig}
\usepackage{url}
\usepackage{dsfont, amssymb, appendix}
\usepackage{amsfonts,dsfont,amssymb,amsthm,stmaryrd,bbm}
\usepackage[cmex10]{amsmath} 

\usepackage{hyperref,mathtools}
\hypersetup{colorlinks=true,pdftitle=""}

\setlength{\topmargin}{-0.45 in}     
\setlength{\oddsidemargin}{0.3in}  
\setlength{\evensidemargin}{0.3in} 
\setlength{\textheight}{9in}
\setlength{\textwidth}{6.1in} 
\setlength{\footskip}{0.55in}  

\newtheorem{conj}{Conjecture}[section]
\newtheorem{thm}{Theorem}[section]

\newtheorem{lem}[conj]{Lemma}
\newtheorem{prop}[conj]{Proposition}

\newtheorem{defn}[conj]{Definition}
\newtheorem{cor}[conj]{Corollary}

\newcommand\independent{\protect\mathpalette{\protect\independent}{\perp}} 
\def\independent#1#2{\mathrel{\rlap{$#1#2$}\mkern2mu{#1#2}}}

\DeclareMathOperator{\Var}{Var}











\def\Var{{\rm Var}}

\def\phi{\varphi}

\def\bee{\begin{eqnarray*}}
\def\ene{\end{eqnarray*}}


\hyphenation{op-tical net-works semi-conduc-tor}

\title{Strongly Convex Divergences}
\author{James Melbourne}

\begin{document}

\maketitle

\begin{abstract}  
    We consider a sub-class of the $f$-divergences satisfying a stronger convexity property, which we refer to as strongly convex, or $\kappa$-convex divergences. 
    We derive new and old relationships, based on convexity arguments, between popular $f$-divergences.
\end{abstract}

\section{Introduction}

The concept of an $f$-divergence, introduced independently by Ali-Sivley \cite{AS66} and Csizis\'ar \cite{Csi63}, unifies several important information measures between probability distributions, 
as integrals of a convex function $f$, composed with the Radon-Nikodym of the two probability distributions.
For a convex function $f:(0,\infty) \to \mathbb{R}$ such that $f(1) = 0$, 
and measures $P$ and $Q$ such that $P \ll Q$ the $f$-divergence from $P$ to $Q$ is given by
$ 
   D_f(P||Q) \coloneqq \int f\left(\frac{dP}{dQ}\right) dQ.
$
The canoncial example of an $f$-divergence, realized by taking $f(x) = x \log x$, is the relative entropy (often called the KL-divergence), and $f$-divergences inherit 
many properties enjoyed by this special case; non-negativity, joint convexity of arguments, and a data processing inequality.
Other important examples include the total variation, the $\chi^2$-divergence, and the squared Hellinger distance.  
The reader is directed to Chapter 6 and 7 of \cite{polyanskiy2014lecture} for more background.

We will be interested in how stronger convexity properties of $f$ give improvements of classical $f$-divergence inequalities.  
This is in part inspired by the work of Sason \cite{sason2019data}, who demonstrated that divergences that are (as we define later) ``$\kappa$-convex'' 
satisfy ``stronger than $\chi^2$'', data-procesing inequalities.  

Aside from the total variation, most divergences of interest have stronger than affine convexity, at least when $f$ is restricted to a sub-interval of the real line.
This observation is especially relevant to the 
situtation in which one wishes to study $D_f(P||Q)$ in the existence of a bounded Radon-Nikodym derivative $\frac{dP}{dQ} \in (a,b) \subsetneq (0,\infty)$.   One naturally obtains such bounds 
for skew divergences.  That is divergences of the form $(P,Q) \mapsto D_f( (1-t)P + t Q|| (1-s)P + s Q)$ for $t,s \in [0,1]$, as in this case, $\frac{(1-t) P + t Q}{(1-s)P + s Q} \leq \max \left\{ \frac{1-t}{1-s}, \frac{t}{s} \right\}$.  Important examples of skew-divergences include the skew divergence 
\cite{lee1999measures} based on the relative entropy and the Vincze-Le Cam divergence \cite{vincze1981concept, LeC86:book}, called the triangular discrimination in \cite{topsoe2000some}
 and its generalization due to Gy\"orfi and Vajda \cite{gyorfi2001class} based on the
$\chi^2$-divergence.  The Jensen-Shannon divergence \cite{lin1991divergence} and its recent generalization \cite{nielsen2020generalization} give examples of $f$-divergences
realized as linear combinations of skewed divergences.  

Let us outline the paper.  In Section \ref{sec: Strongly convex divergences} we derive elementary results of $\kappa$-convex divergences and give a table of examples of $\kappa$-convex divergences.
 We demonstrate that 
$\kappa$-convex divergences can be lower bounded by the $\chi^2$-divergence, and that the joint convexity of the map $(P,Q) \mapsto D_f(P||Q)$ can be sharpened under $\kappa$-convexity
conditions on $f$.  As a consequence we obtain bounds between the mean square total variation distance of a set of distributions from its barycenter, and the average $f$-divergence from the set 
to the barycenter.

In Section \ref{sec: Skew divergences} we investigate general skewing of $f$-divergences.  In particular we introduce the skew-symmetrization of an $f$-divergence, which 
recovers the Jensen-Shannon divergence and the Vincze-Le Cam divergences as special cases.  We also show that a scaling of the Vincze-Le Cam divergence is minimal among skew-symmetrizations
of $\kappa$-convex divergences on
$(0,2)$.  We then consider linear combinations of skew divergences, and show that a generalized Vincze-Le Cam divergence (based on skewing the $\chi^2$-divergence) can be upper bounded by the generalized Jensen-Shannon divergence
 introduced
recently by Neilsen \cite{nielsen2020generalization} (based on skewing the relative entropy), reversing the obvious bound that can be obtained from the classical bound $D(P||Q) \leq \chi^2(P||Q)$.  
We also derive upper and lower total variation bounds for Neilsen's generalized Jensen-Shannon divergence.

In Section \ref{sec: TV bounds and Bayes risk} we consider a family of densities $\{p_i\}$ weighted by $\lambda_i$, and a density $q$.  
We use the Bayes estimator\footnote{This is the Bayes estimator for the loss function $\ell(i,j) = 1- \delta_i(j)$} $T(x) = \arg \max_i \lambda_i p_i(x)$ to derive 
a convex decomposition of the barycenter $p = \sum_i \lambda_i p_i$ and of $q$, each into 
two auxiliary densities.  We use this decomposition to sharpen, for $\kappa$-convex divergences, an elegant theorem of Guntuboyina \cite{guntuboyina2011lower} 
that generalizes Fano and Pinsker's inequality to 
$f$-divergences.  We then demonstrate explicitly, using an argument of Topsoe, how our sharpening of Guntuboyina's inequality gives a new sharpening of Pinsker's inequality
in terms of the convex decomposition induced by the Bayes estimator.

\subsection*{Notation}
We consider Borel probability measures $P$ and $Q$ on a Polish space $\mathcal{X}$.
For a convex function $f$ such that $f(1) = 0$,
define the $f$-divergence from $P$ to $Q$, via densities $p$ for $P$ and $q$ for $Q$ with respect to a common reference measure $\mu$ as
\begin{align}   
    D_f(p||q) &= \int_{\mathcal{X}} f\left( \frac{p}{q} \right) q d\mu
        \\
        &=
            \int_{\{pq > 0\}} q f\left( \frac p q \right) d\mu + f(0) Q( \{ p = 0\})  + f^*(0) P( \{ q = 0 \}).
\end{align}

We note that this representation is independent of $\mu$, and such a reference measure always exists, take $\mu = P+Q$ for example.

For $t,s \in [0,1]$, define
\begin{align}
    D_f(t||s) \coloneqq s f\left( \frac{t}{ s }\right) + (1-s) f \left( \frac{1-t}{1-s} \right)
\end{align}
with the conventions, $f(0) = \lim_{t \to 0^+} f(t)$, $0 f(0/0) = 0$, and $0 f(a/0) = a \lim_{t \to \infty} f(t)/t$.
For a random variable $X$ and a set $A$ we denote the probability that $X$ take a value in $A$ by $\mathbb{P}(X \in A)$, the expectation of the random variable by $\mathbb{E}X$ and the variance by $\Var(X) \coloneqq \mathbb{E} | X - \mathbb{E}X|^2$.
For a probability measure $\mu$ satisfying $\mu(A) = \mathbb{P}(X \in A)$ for all Borel $A$, we write $X \sim \mu$, and when there exists a probability density function such that 
$\mathbb{P}(X \in A) = \int_A f(x) d\gamma(x)$ for a reference measure $\gamma$, we write $X \sim f$.  For a probability measure $\mu$ on $\mathcal{X}$, and an $L^2$ function $f: \mathcal{X} \to \mathbb{R}$,
  we denote $\Var_\mu(f) \coloneqq \Var(f(X))$ for $X \sim \mu$.
 

\section{Strongly convex divergences} \label{sec: Strongly convex divergences}

\begin{defn} \label{defn: kappa convexity}
   A $\mathbb{R} \cup \{ \infty \}$-valued function $f$ on a convex set $K \subseteq \mathbb{R}$ is $\kappa$-convex when $x,y \in K$ and $t \in [0,1]$ implies
   \begin{align} \label{eq: kappa convexity}
       f((1-t)x+ty) \leq (1-t)f(x) + t f(y) - \kappa t(1-t)(x-y)^2/2.
   \end{align}
\end{defn}
For example, when $f$ is twice differentiable, \eqref{eq: kappa convexity} is equivalent to $f''(x) \geq \kappa$ for $x \in K$. 
 Note that the case $\kappa = 0$ is just usual convexity.

 \begin{prop} \label{eq: strongly convex equivalences}
    For $f: K \to \mathbb{R} \cup \{\infty\}$, and $\kappa \in [0,\infty)$ following are equivalent:
    \begin{enumerate}
        \item $f$ is $\kappa$-convex.
        \item The function $f - \kappa (t-a)^2/2$ is convex for any $a \in \mathbb{R}$
        \item The right handed derivative, defined  as 
            $f'_+(t) \coloneqq \lim_{h \downarrow 0} \frac{f(t+h) - f(t)}{h}$ satisfies,
            \begin{align*}
                f'_+(t) \geq f'_+(s) + \kappa (t-s)
            \end{align*}
            for $t \geq s$.
    \end{enumerate}
 \end{prop}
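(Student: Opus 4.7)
The plan is to establish the cycle $(1) \Leftrightarrow (2) \Leftrightarrow (3)$, with $(1) \Leftrightarrow (2)$ reduced to a single quadratic identity and $(2) \Leftrightarrow (3)$ reduced to the well-known characterization of convexity by monotonicity of the right derivative.

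For $(1) \Leftrightarrow (2)$, fix $a \in \mathbb{R}$ and set $g(x) = f(x) - \kappa(x-a)^2/2$. The goal is to verify that the convexity inequality for $g$ at $(1-t)x + ty$ is exactly the $\kappa$-convexity inequality for $f$. This reduces to the polynomial identity
\begin{align*}
    (1-t)(x-a)^2 + t(y-a)^2 - \bigl( (1-t)x + ty - a \bigr)^2 = t(1-t)(x-y)^2,
\end{align*}
which one checks by expanding both sides (the cross terms involving $a$ cancel, leaving a standard variance-type identity). Substituting this identity into the convexity inequality for $g$ produces precisely \eqref{eq: kappa convexity}. Since the resulting equivalence is independent of $a$, convexity of $g$ for one value of $a$ is equivalent to convexity for every $a$, and both are equivalent to $\kappa$-convexity of $f$.

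For $(2) \Leftrightarrow (3)$, I would use the standard fact that a real-valued function on an interval is convex if and only if its right derivative exists (in the extended sense) and is non-decreasing. Applied to $g(x) = f(x) - \kappa(x-a)^2/2$, one has $g'_+(x) = f'_+(x) - \kappa(x-a)$ wherever $f'_+$ is defined, and $g'_+$ non-decreasing on the domain is equivalent to
\begin{align*}
    f'_+(t) - \kappa(t-a) \geq f'_+(s) - \kappa(s-a) \quad \text{for } t \geq s,
\end{align*}
which is exactly condition (3) (the constant $\kappa a$ cancels). Conversely, (3) says $g'_+$ is non-decreasing, which, by the same classical characterization, forces $g$ to be convex, yielding (2).

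The only real obstacle is the possibility that $f$ takes the value $\infty$; however, the inequalities in (1) and (3) are vacuous whenever an endpoint value is $\infty$, and on the sub-interval where $f$ is finite the argument above applies verbatim, so the extended-real-valued case reduces to the standard one. The main work is therefore confined to verifying the quadratic identity and invoking the classical derivative characterization of convexity.
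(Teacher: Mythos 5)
Your argument is correct and is essentially the paper's own proof written out in full: the paper simply observes that the substitution $f \mapsto f - \kappa(t-a)^2/2$ reduces everything to the classical $\kappa=0$ case, and your quadratic identity $(1-t)(x-a)^2 + t(y-a)^2 - \bigl((1-t)x+ty-a\bigr)^2 = t(1-t)(x-y)^2$ together with the monotone-right-derivative characterization of convexity is exactly the content of that reduction. The only caveat, which you share with the paper's one-line proof, is that the implication $(3)\Rightarrow(2)$ relies on the classical converse, which tacitly assumes continuity of $f$ on the interior of $K$.
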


 \begin{proof}
    Observe that it is enough to prove the result when $\kappa = 0$, where the proposition is reduced to the classical result for convex functions.
 \end{proof}

\begin{defn} \label{defn: kappa convex f divergence}
    An $f$ divergence $D_f( \cdot || \cdot )$ is $\kappa$-convex on an interval $K$ for $\kappa \geq 0$ 
    when the function $f$ is $\kappa$-convex on $K$.
\end{defn}

The table below lists some $\kappa$-convex $f$-divergences of interest to this article. 
\begin{center}
    \begin{tabular}{ |c|c|c|c| } 
    \hline
    Divergence & $f$ & $\kappa$ & Domain \\
     \hline
     \hline
     relative entropy (KL) & $ t \log t $ & $\frac 1 M$ & $(0,M]$ \\
     \hline
     total variation & $\frac{|t-1|}{2}$ & $0$ & $(0,\infty)$ \\
     \hline
     Pearson's $\chi^2$& $(t-1)^2$ & $2$ & $(0,\infty)$\\ 
     \hline
     squared Hellinger & $2( 1 - \sqrt{t} ) $& $M^{- \frac 3 2}/2$ & $(0,M]$\\ 
     \hline
     reverse relative entropy & $-\log t$ & $1/M^2$ & $(0,M]$\\ 
     \hline
     Vincze- Le Cam & $\frac{(t-1)^2}{t+1}$ & $\frac{8}{ (M+1)^3}$ & $(0, M]$ \\
     \hline
     Jensen-Shannon & $(t+1)\log \frac{2}{t+1} + t \log t$ & $\frac{1}{M(M+1)}$ & $(0,M]$ \\
     \hline
     Neyman's $\chi^2$ & $\frac 1 t - 1$ & $2/M^3$ & $(0,M]$ \\
     \hline
     Sason's $s$ & $\log(s+t)^{(s+t)^2 } -  \log(s+1)^{(s+1)^2}$ & $2 \log(s + M) + 3$ & $[M,\infty)$, $s > e^{-3/2}$\\
     \hline
     $\alpha$-divergence & $\frac{4\left( 1- t^{\frac{1 + \alpha}{2}}\right)}{1-\alpha^2}, \hspace{2mm} \alpha \neq \pm 1$ & $M^{\frac{\alpha - 3}{2}}$ & $ \begin{cases}
         [M,\infty), \hspace{2mm} \alpha > 3 \\
            (0, M], \hspace{2mm} \alpha < 3
     \end{cases}$ \\  
     \hline
    \end{tabular}
    \end{center}
    Observe that we have taken the normalization convention on the total variation, which we denote by $|P-Q|_{TV}$, such that $|P- Q|_{TV} = \sup_A |P(A) - Q(A)| \leq 1$.
    Also note,  the $\alpha$-divergence interpolates Pearson's $\chi^2$-divergence when $\alpha = 3$, one half Neyman's $\chi^2$-divergence when 
    $\alpha = -3$, the squared Hellinger divergence when $\alpha = 0$, and has limiting cases, the relative entropy when $\alpha = 1$ and the reverse 
    relative entropy when $ \alpha = -1$.  If $f$ is $\kappa$-convex on $[a,b]$, then its dual divergence $f^*(x) \coloneqq  x f(x^{-1})$ is $\kappa a^3$-convex 
    on $[\frac 1 b, \frac 1 a]$.  Recall that $f^*$ satisfies the equality $D_{f^*}(P||Q) = D_f(Q||P)$.  For brevity, we will use $\chi^2$-divergence to refer to the Pearson 
    $\chi^2$-divergence, and will articulate Neyman's $\chi^2$ explicitly when necessary.

The next lemma is a restatement of Jensen's inequality.

\begin{lem} \label{lem: kappa jensens}
If $f$ is $\kappa$-convex on the range of $X$, 
\[
    \mathbb{E}f(X) \geq f(\mathbb{E}(X)) + \frac \kappa 2 \Var(X).
\]
\end{lem}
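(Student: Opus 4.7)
The plan is to reduce the claim to the ordinary Jensen inequality by using the equivalence between $\kappa$-convexity of $f$ and plain convexity of a quadratic modification, which was recorded as item (2) of Proposition \ref{eq: strongly convex equivalences}. The natural choice of recentering is $a = \mathbb{E}X$, since then evaluating the quadratic at the mean gives zero and we pick up the variance from the expectation side.

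Concretely, I would set $g(t) \coloneqq f(t) - \tfrac{\kappa}{2}(t - \mathbb{E}X)^2$ and invoke Proposition \ref{eq: strongly convex equivalences}(2) to conclude that $g$ is convex on $K$, which contains the range of $X$. Applying the ordinary Jensen inequality to $g$ gives $\mathbb{E}g(X) \geq g(\mathbb{E}X)$. Rewriting the left-hand side as $\mathbb{E}f(X) - \tfrac{\kappa}{2}\,\mathbb{E}(X - \mathbb{E}X)^2 = \mathbb{E}f(X) - \tfrac{\kappa}{2}\Var(X)$ and noting that the right-hand side equals $f(\mathbb{E}X)$ (since the quadratic term vanishes at $t = \mathbb{E}X$) yields the desired inequality after rearrangement.

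There is no serious obstacle here: the only things to keep track of are the standard integrability assumptions implicit in writing $\mathbb{E}f(X)$ and $\Var(X)$, and that the range of $X$ lies in the set $K$ on which $\kappa$-convexity is assumed, so that the recentered function $g$ is genuinely convex on the support of the law of $X$. Both are either given in the statement or harmless to assume, so the argument reduces to a single line once Proposition \ref{eq: strongly convex equivalences} has been invoked.
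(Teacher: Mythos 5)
Your proof is correct and is essentially the paper's own argument: the paper likewise applies Jensen's inequality to $f(t)-\kappa t^2/2$, and your recentering of the quadratic at $\mathbb{E}X$ is only a cosmetic variant that makes the variance term appear slightly more directly.
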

\begin{proof}
Apply Jensen's inequality to $f(x) - \kappa x^2/2$.
\end{proof}


For a convex function $f$ such that $f(1) = 0$, and $c \in \mathbb{R}^d$ 
the function $\tilde{f}(t) = f(t) + c(t-1)$ remains a convex function, and what is more satisfies
\begin{align*}
    D_f(P||Q) = D_{\tilde{f}}(P ||Q)
\end{align*}
since $\int c ( p/q- 1) q d \mu = 0$.


\begin{defn}[$\chi^2$-divergence] \label{def: chi square divergence}
     For $f(t) = (t-1)^2$, we write
    \begin{align*}
        \chi^2(P || Q) \coloneqq D_f(P ||Q)
    \end{align*}
\end{defn}

The following result shows that every strongly convex divergence can be lower bounded,
up to its convexity constant $\kappa >0$, by the $\chi^2$-divergence.

\begin{thm} \label{thm: chi square divergence is smallest strongly convex divergence}
    For a $\kappa$-convex function $f$,
    \begin{align*}
        D_f(P || Q) \geq \frac{\kappa}{2}\chi^2(P||Q).
    \end{align*}
\end{thm}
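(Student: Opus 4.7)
The plan is a direct application of the $\kappa$-convex Jensen inequality (Lemma \ref{lem: kappa jensens}) to the likelihood ratio $X = p/q$ regarded as a random variable under $Q$.

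Assume first that $P \ll Q$ and that the range of $p/q$ lies in the interval on which $f$ is $\kappa$-convex. By definition $D_f(P||Q) = \mathbb{E}_Q f(X)$, and since $\mathbb{E}_Q X = \int (p/q)\, q\, d\mu = 1$, the $\chi^2$-divergence is precisely the $Q$-variance of $X$, namely $\Var_Q(X) = \mathbb{E}_Q(X-1)^2 = \chi^2(P||Q)$. Applying Lemma \ref{lem: kappa jensens} and using $f(1) = 0$ then gives
\begin{align*}
D_f(P||Q) = \mathbb{E}_Q f(X) \geq f(\mathbb{E}_Q X) + \frac{\kappa}{2} \Var_Q(X) = f(1) + \frac{\kappa}{2}\chi^2(P||Q) = \frac{\kappa}{2}\chi^2(P||Q),
\end{align*}
which is the claimed inequality.

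The remaining cases are matters of bookkeeping rather than substance. If $P$ is not absolutely continuous with respect to $Q$, the convention $0 \cdot f(a/0) = a \lim_{t \to \infty} f(t)/t$ applies; for any $\kappa$-convex $f$ whose domain extends to $+\infty$ the slope $f(t)/t$ tends to $\infty$, so both $D_f(P||Q)$ and $\chi^2(P||Q)$ are infinite and the inequality is trivial. When $f$ is only $\kappa$-convex on a proper sub-interval $K \subsetneq (0,\infty)$, the statement must be read as requiring $p/q \in K$ almost surely under $Q$; this is exactly the regime of the skew-divergence applications previewed in the introduction, where $p/q$ is automatically bounded in terms of the skew parameters. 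The only genuine work is the invocation of Lemma \ref{lem: kappa jensens}, and that was already handled by reducing to ordinary Jensen applied to $f(x) - \kappa x^2/2$.
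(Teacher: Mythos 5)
Your proof is correct, but it runs along a slightly different track than the paper's. The paper first recenters $f$ by subtracting the affine term $f'_+(1)(t-1)$ (which leaves the divergence unchanged), then observes that convexity of $\phi(t)=f(t)-\kappa(t-1)^2/2$ together with $\phi'_+(1)=0$ forces $\phi\geq 0$, i.e.\ the \emph{pointwise} bound $f(t)\geq \frac{\kappa}{2}(t-1)^2$, which is then integrated; the paper explicitly frames this as an instance of the Sason--Verd\'u functional-domination technique (Theorem~\ref{thm: Sason-Verdu functional dominance in f divergences}), and the pointwise inequality is reused in the ensuing sharpness corollary. You instead apply the strong-convexity Jensen inequality (Lemma~\ref{lem: kappa jensens}) to $X=p/q$ under $Q$, using $\mathbb{E}_Q X=1$, $\Var_Q(X)=\chi^2(P\|Q)$, and $f(1)=0$. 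Your route avoids the affine recentering entirely and is essentially the same mechanism the paper deploys later for the more general Proposition~\ref{prop: kappa convexity of kappa convex divergences continuous} (of which this theorem is the one-point-index special case); what it does not deliver is the pointwise domination $f\geq\frac{\kappa}{2}(t-1)^2$ that the paper leans on afterward. Your handling of the edge cases is also sound: for strongly convex $f$ one has $\lim_{t\to\infty}f(t)/t=\infty$ (as the paper itself notes in Proposition~\ref{prop: no reverse pinskers for strongly convex}), so $P\not\ll Q$ makes both sides infinite, and the restriction of the range of $p/q$ to the interval of $\kappa$-convexity is indeed the intended reading.
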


\begin{proof}
    Define a $\tilde{f}(t) = f(t) - f'_+(1)(t-1)$, and note that $\tilde{f}$ 
    defines the same $\kappa$-convex divergence as 
    $f$.  So we may assume without loss of generality that $f'_+$ is uniquely zero 
    when $t = 1$.  Since $f$ is $\kappa$-convex $\phi: t \mapsto f(t) - \kappa (t-1)^2 /2$
    is convex, and by $f'_+(1)  =0$,  $\phi'_+(1) = 0$ as well.  Thus $\phi$ takes
    its minimum when $t = 1$ and hence $\phi \geq 0$ so that $f \geq \kappa(t-1)^2/2$.
    Computing,
    \begin{align*}
        D_f(P||Q) 
            &=
                \int f \left( \frac{dP}{dQ} \right) d Q
                    \\
            &\geq 
                \frac{\kappa}{2} \int \left( \frac{dP}{dQ} - 1\right)^2 d Q
                    \\
            &=
                \frac{\kappa}{2} \chi^2(P||Q).
    \end{align*}
\end{proof}

The above proof uses a pointwise inequality between convex functions to derive
an inequality between their respective divergences.  This simple technique was shown
to have useful implications by Sason and Verd\'u in \cite{sason2016f}, where it appears 
as Theorem 1, and  was
used to give sharp comparisons in several $f$-divergence inequalities.

\begin{thm}[Sason-Verd\'u \cite{sason2016f}] \label{thm: Sason-Verdu functional dominance in f divergences}
    For divergences defined by $g$ and $f$ with 
    $c f(t) \geq g(t)$ for all $t$, then
    \begin{align*}
        D_g(P||Q) \leq c D_f(P||Q).
    \end{align*}
    Morever if $f'(1) = g'(1) = 0$ then
    \begin{align*}
        \sup_{P \neq Q} \frac{D_g(P||Q)}{D_f(P||Q)}  
            = 
                \sup_{t \neq 1} \frac{g(t)}{f(t)}.
    \end{align*}
\end{thm}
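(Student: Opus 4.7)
The plan is to handle the two claims separately, the first by a direct pointwise-to-integral argument (exactly the technique that proved Theorem 2.6), and the second by combining the first claim with an explicit construction of near-extremal pairs $(P,Q)$ concentrated near a ratio $t^*$ that nearly realizes $\sup_{t \neq 1} g(t)/f(t)$.

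For the inequality $D_g(P\|Q) \leq c D_f(P\|Q)$, I would set $t = dP/dQ$ pointwise, apply the hypothesis $cf(t) \geq g(t)$, and integrate against $Q$. The boundary cases $\{p=0\}$ and $\{q=0\}$ are handled by the same conventions $f(0) = \lim_{t\to 0^+} f(t)$ and $f^*(0) = \lim_{t\to\infty} f(t)/t$ that were already set up in the Notation, since these conventions are compatible with scaling by $c$ and with the pointwise inequality.

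For the sharpness claim, let $c^* = \sup_{t\neq 1} g(t)/f(t)$. The inequality $\sup_{P\neq Q} D_g(P\|Q)/D_f(P\|Q) \leq c^*$ is immediate from the first part (after checking $f > 0$ away from $1$, which follows from convexity together with $f(1) = 0$ and $f'(1) = 0$; otherwise the ratio is interpreted as $+\infty$ and both sides agree trivially). For the matching lower bound I would fix any $t^* \neq 1$ in the domain and build a two-point family: on $\mathcal{X} = \{a,b\}$ take $Q(\{a\}) = q$, $Q(\{b\}) = 1-q$, and choose $P$ so that $dP/dQ(a) = t^*$ and $dP/dQ(b) = s(q) := (1-qt^*)/(1-q)$, which is forced by $P$ being a probability measure. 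Then
\begin{align*}
\frac{D_g(P\|Q)}{D_f(P\|Q)} = \frac{q\, g(t^*) + (1-q)\, g(s(q))}{q\, f(t^*) + (1-q)\, f(s(q))}.
\end{align*}
Since $s(q) - 1 = q(1 - t^*)/(1 - q) = O(q)$ as $q \downarrow 0$, and since the hypotheses $f(1) = f'(1) = 0$ and $g(1) = g'(1) = 0$ together with convexity give $f(s(q)), g(s(q)) = O((s(q)-1)^2) = O(q^2)$, both correction terms are $O(q^2)$ while the leading terms are $O(q)$. Dividing and letting $q \downarrow 0$ yields the ratio $g(t^*)/f(t^*)$; then taking $t^*$ along a sequence realizing $c^*$ and extracting a diagonal subsequence gives $\sup_{P\neq Q} D_g(P\|Q)/D_f(P\|Q) \geq c^*$.

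The main obstacle is the boundary behavior when the supremum $c^*$ is approached as $t^* \to 0$ or $t^* \to \infty$: the auxiliary value $s(q)$ must remain in the domain on which the convexity/differentiability assumptions at $1$ give the $O(q^2)$ estimate, and one must be careful that $q$ is chosen small enough relative to $t^*$ (say $q\, t^* \to 0$) to keep $s(q)$ close to $1$. In these limiting cases one should interpret $g(t^*)/f(t^*)$ via the conventions for $f(0)$, $f^*(0)$, and reduce to a singular pair $(P,Q)$, which is the only real subtlety in the construction.
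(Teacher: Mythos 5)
The paper does not actually prove this statement; it is quoted verbatim from Sason--Verd\'u \cite{sason2016f} and used as a black box, so there is no internal proof to compare against. Your argument is the standard one and is essentially correct: the first claim is the pointwise-to-integral comparison (the boundary terms are handled because $cf(t)\geq g(t)$ passes to the limits defining $f(0)$ and $f^*(0)$), and the second claim follows from the two-point construction with $Q=(q,1-q)$, $P=(qt^*,1-qt^*)$. One technical overstatement: convexity together with $f(1)=f'(1)=0$ does \emph{not} give $f(s(q))=O((s(q)-1)^2)$ --- consider $f(t)=|t-1|^{3/2}$, which is convex with vanishing value and derivative at $1$ but is not $O((t-1)^2)$ near $1$. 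What differentiability at $1$ does give is $f(1+\epsilon)=o(\epsilon)$, hence $f(s(q)),g(s(q))=o(q)$ for fixed $t^*$, and since the leading terms $qf(t^*)$, $qg(t^*)$ are of exact order $q$, this weaker estimate already yields $\lim_{q\downarrow 0}D_g(P\|Q)/D_f(P\|Q)=g(t^*)/f(t^*)$. Also note that no diagonal extraction is really needed: for each fixed $t^*\neq 1$ in $(0,\infty)$ the limit above shows $\sup_{P\neq Q}D_g/D_f\geq g(t^*)/f(t^*)$, and taking the supremum over $t^*$ finishes the lower bound, so the worry about $t^*\to 0$ or $t^*\to\infty$ dissolves (the supremum in the statement ranges over the domain $(0,\infty)\setminus\{1\}$). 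With the $O(q^2)$ claim replaced by $o(q)$, your proof is complete and matches the argument in the cited source.
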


\begin{cor}
    For a smooth $\kappa$-convex divergence $f$, the inequality
    \begin{align}   
        D_f(P ||Q)  \geq  \frac{\kappa}{2}\chi^2(P||Q)
    \end{align}
    is sharp multiplicatively in the sense that 
    \begin{align} \label{eq: if -  sharpness in chi square inequality}
        \inf_{P \neq Q} \frac{ D_f(P||Q)}{\chi^2(P||Q)} = \frac{\kappa}{2}.
    \end{align}
     if $f''(1) =\kappa$.
\end{cor}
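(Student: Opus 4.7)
The plan is to combine the lower bound from Theorem~\ref{thm: chi square divergence is smallest strongly convex divergence} with the Sason--Verd\'u dominance identity (Theorem~\ref{thm: Sason-Verdu functional dominance in f divergences}) to reduce the infimum over pairs $(P,Q)$ to a one-variable minimization, and then evaluate it using smoothness at $t=1$.

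First, I would observe that the inequality $\inf_{P \neq Q} D_f(P\|Q)/\chi^2(P\|Q) \geq \kappa/2$ is immediate from Theorem~\ref{thm: chi square divergence is smallest strongly convex divergence}, so the whole game is producing a matching sequence of $(P,Q)$ witnessing the reverse. As in the proof of Theorem~\ref{thm: chi square divergence is smallest strongly convex divergence}, I replace $f$ by $\tilde f(t) = f(t) - f'(1)(t-1)$; this does not change $D_f$, preserves $\kappa$-convexity, and satisfies $\tilde f(1) = \tilde f'(1) = 0$ and $\tilde f''(1) = \kappa$. So without loss of generality assume $f$ itself is so normalized.

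Next, apply the second half of Theorem~\ref{thm: Sason-Verdu functional dominance in f divergences} with $g(t) = (t-1)^2$, noting $g'(1) = 0$ and that Theorem~\ref{thm: chi square divergence is smallest strongly convex divergence} gives the pointwise bound $f(t) \geq \tfrac{\kappa}{2}(t-1)^2$ needed to invoke dominance. This yields
\begin{align*}
    \sup_{P \neq Q} \frac{\chi^2(P\|Q)}{D_f(P\|Q)} = \sup_{t \neq 1} \frac{(t-1)^2}{f(t)},
\end{align*}
and taking reciprocals converts the problem into computing $\inf_{t \neq 1} f(t)/(t-1)^2$.

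Finally, I would evaluate this one-dimensional infimum. The pointwise inequality $f(t) \geq \tfrac{\kappa}{2}(t-1)^2$ gives the bound $f(t)/(t-1)^2 \geq \kappa/2$ for all $t \neq 1$, while smoothness with $f(1) = f'(1) = 0$ and $f''(1) = \kappa$ gives the Taylor expansion $f(t) = \tfrac{\kappa}{2}(t-1)^2 + o((t-1)^2)$ as $t \to 1$, so $\lim_{t \to 1} f(t)/(t-1)^2 = \kappa/2$. Hence the infimum equals $\kappa/2$, completing the proof. The only subtle point is the normalization step at the outset: the Sason--Verd\'u identity requires $f'(1) = 0$, and one needs the invariance of the divergence under adding a linear term in order to legitimately impose it; the rest is routine.
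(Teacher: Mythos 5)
Your proposal is correct and follows essentially the same route as the paper: normalize so that $f'(1)=0$, invoke the Sason--Verd\'u identity with $g(t)=(t-1)^2$ to reduce to the one-variable ratio, bound it pointwise by the $\kappa$-convexity estimate $f(t)\geq \tfrac{\kappa}{2}(t-1)^2$, and identify the extremal value in the limit $t\to 1$ (the paper uses L'H\^opital where you use the second-order Taylor expansion, which is the same computation).
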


\begin{proof}
    Without loss of generality we assume that $f'(1) = 0$.
    If $f''(1)  = \kappa + 2\varepsilon$ for some $\varepsilon >0$, then taking $g(t)= (t-1)^2$ and applying Theorem 
    \ref{thm: Sason-Verdu functional dominance in f divergences} and Theorem \ref{thm: chi square divergence is smallest strongly convex divergence}
    \begin{align}   
        \sup_{P \neq Q} \frac{D_g(P||Q)}{D_f(P||Q)}  
        = 
            \sup_{t \neq 1} \frac{g(t)}{f(t)} \leq \frac{2}{\kappa}.
    \end{align}
    Observe that after two applications of L'Hospital,
    \begin{align*}
        \lim_{\varepsilon \to 0} \frac{g(1 + \varepsilon)}{f(1+\varepsilon)} = \lim_{\varepsilon \to 0} \frac{g'(1+\varepsilon)}{f'(1+\varepsilon)} =  \frac{g''(1)}{f''(1)}
             = \frac{2}{\kappa} \leq \sup _{t \neq 1} \frac{g(t)}{f(t)}.
    \end{align*}
     Thus \eqref{eq: if -  sharpness in chi square inequality} follows.
\end{proof}

\begin{prop} \label{prop: kappa convexity of kappa convex divergences continuous}
    When $D_f(\cdot || \cdot)$ is an $f$ divergence such that $f$ is $\kappa$-convex on  $[a,b]$ and 
    that $P_\theta$  and $Q_\theta$ are probability measures indexed by a set $\Theta$ 
    such that $a \leq \frac{dP_\theta}{dQ_\theta}(x) \leq b$,
    holds for all $\theta$ and $P \coloneqq \int_\Theta P_\theta d \mu(\theta)$ and $Q \coloneqq \int_\Theta Q_\theta d \mu(\theta)$ 
    for a probability measure $\mu$ on $\Theta$, then
    \begin{align}
        D_f( P &||Q) 
            \leq  \int_\Theta D_f(P_\theta || Q_\theta) d\mu(\theta)  - \frac{\kappa}{2} \int_\Theta \int_{\mathcal{X}} \left( \frac{dP_\theta}{dQ_\theta} -
            \frac{dP}{dQ}  \right)^2 dQ d\mu,
    \end{align}

    In particular when $Q_\theta = Q$ for all $\theta$
    \begin{align}
        D_f( P &||Q) \\
        &\leq  \int_\Theta  D_f(P_\theta || Q) d\mu(\theta) -  \frac{\kappa}{2} \int_\Theta \int_\mathcal{X} \left( \frac{d P_\theta}{dQ} - \frac{dP}{dQ} \right)^2 dQ d\mu(\theta)
            \\
            &\leq
            \int_\Theta  D_f(P_\theta || Q) d\mu(\theta) - \kappa \int_\Theta | P_\theta - P |^{2}_{TV} d \mu(\theta)
    \end{align}
    \end{prop}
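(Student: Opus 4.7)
The plan is to apply the $\kappa$-convex Jensen inequality (Lemma \ref{lem: kappa jensens}) pointwise on $\mathcal{X}$ and then integrate against $Q$. First, I would fix a common reference measure $\nu$ dominating $P, Q$ and all $P_\theta, Q_\theta$ (e.g.\ build it from $P+Q+\int (P_\theta+Q_\theta)d\mu$), and write $p, q, p_\theta, q_\theta$ for the corresponding densities. The key mixture identity is that on $\{q>0\}$,
\begin{align*}
    \frac{dP}{dQ}(x) = \int_\Theta \frac{dP_\theta}{dQ_\theta}(x) \cdot \frac{dQ_\theta}{dQ}(x)\, d\mu(\theta),
\end{align*}
and the weights $\frac{dQ_\theta}{dQ}(x) d\mu(\theta)$ form a probability measure $\tilde\mu_x$ on $\Theta$ (since $q=\int q_\theta d\mu$). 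Thus $\frac{dP}{dQ}(x)$ is the $\tilde\mu_x$-expectation of the random variable $\theta\mapsto r_\theta(x):=\frac{dP_\theta}{dQ_\theta}(x)$, which by assumption takes values in $[a,b]$.

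Next, I would apply Lemma \ref{lem: kappa jensens} with the law of $r_\theta(x)$ under $\tilde\mu_x$: for $Q$-a.e.\ $x$,
\begin{align*}
    f\!\left(\tfrac{dP}{dQ}(x)\right) \leq \int_\Theta f(r_\theta(x))\, d\tilde\mu_x(\theta) - \frac{\kappa}{2}\int_\Theta \Bigl(r_\theta(x)-\tfrac{dP}{dQ}(x)\Bigr)^{\!2} d\tilde\mu_x(\theta).
\end{align*}
Integrating against $dQ(x)$, unfolding $d\tilde\mu_x = \frac{dQ_\theta}{dQ}\, d\mu$, and applying Fubini converts the first term into $\int_\Theta D_f(P_\theta\|Q_\theta)\, d\mu(\theta)$ (using $\int f(r_\theta)\,\frac{dQ_\theta}{dQ}\,dQ = \int f(r_\theta)\, dQ_\theta = D_f(P_\theta\|Q_\theta)$) and yields exactly the stated penalty term.

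For the specialization $Q_\theta=Q$, the tilting weight is identically $1$, so $\tilde\mu_x=\mu$ and the variance correction becomes $\int\int (r_\theta-r)^2\, dQ\, d\mu$ directly. To pass to total variation, I would apply Cauchy-Schwarz against the probability measure $Q$:
\begin{align*}
    \int (r_\theta - r)^2\, dQ \;\geq\; \Bigl(\int |r_\theta - r|\, dQ\Bigr)^{\!2} = \bigl\|p_\theta - p\bigr\|_{L^1(\nu)}^{2} = 4\,|P_\theta - P|_{TV}^{2},
\end{align*}
and integrate over $\mu$. (The proposition states the coefficient $\kappa$, which matches up to the factor of $4$ absorbed from the TV normalization.)

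The main obstacle is bookkeeping on the exceptional sets: the pointwise decomposition of $dP/dQ$ is valid only on $\{q>0\}$ and off the $\mu$-null set where $q_\theta=0$, and one must check that the conventions $f(0)=\lim_{t\to 0^+}f(t)$ and $0\cdot f(a/0)=a\lim_{t\to\infty}f(t)/t$ from the divergence definition account for the boundary contributions consistently on both sides. This is purely measure-theoretic bookkeeping; once Fubini is justified on $\{q>0\}$ and the boundary terms are checked to cancel (which they do since the hypothesis $dP_\theta/dQ_\theta\in [a,b]$ forces $P_\theta\ll Q_\theta$, hence $P\ll Q$), the inequality follows.
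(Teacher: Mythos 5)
Your proposal is correct and follows essentially the same route as the paper: represent $\frac{dP}{dQ}(x)$ as the expectation of $\frac{dP_\theta}{dQ_\theta}(x)$ under the tilted probability measure $\frac{dQ_\theta}{dQ}(x)\,d\mu(\theta)$ on $\Theta$, apply the $\kappa$-convex Jensen inequality of Lemma \ref{lem: kappa jensens} pointwise, integrate with Fubini, and pass to total variation via Cauchy--Schwarz. Your accounting of the factor $\int|p_\theta-p|\,d\nu = 2|P_\theta-P|_{TV}$ is in fact more careful than the paper's (which silently drops it), and it yields the stated bound a fortiori since $2\kappa \geq \kappa$.
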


    \begin{proof}
        Let $d\theta$ denote a reference measure dominating $\mu$ so that $d \mu = \varphi(\theta) d \theta$ then write
        $\nu_\theta = \nu(\theta,x) = \frac{ d Q_\theta}{dQ}(x) \varphi(\theta)$.
        \begin{align}
            D_f (P ||Q )
                &=
                    \int_{\mathcal{X}} f\left( \frac{dP}{dQ} \right) dQ
                        \\
                &=
                    \int_{\mathcal{X}} f\left( \int_\Theta \frac{dP_\theta}{dQ} d\mu(\theta) \right)  dQ
                        \\
                &=
                    \int_{\mathcal{X}} f\left( \int_\Theta \frac{dP_\theta}{dQ_\theta} \nu(\theta,x) d\theta \right)  dQ
        \end{align}
        By Jensen's inequality, as in Lemma \ref{lem: kappa jensens}
        \begin{align*}
            f\left( \int_\Theta \frac{dP_\theta}{dQ_\theta} \nu_\theta d\theta \right)
                \leq
                    \int_\theta f &\left(\frac{dP_\theta}{dQ_\theta} \right) \nu_\theta d \theta -
                        \frac{\kappa}{2} \int_\Theta \left( \frac{dP_\theta}{dQ_\theta} -
                         \int_\Theta \frac{dP_\theta}{dQ_\theta} \nu_\theta d\theta \right)^2 \nu_\theta d\theta
        \end{align*}
        Integrating this inequality gives
        \begin{align} \label{eq: inequality we want}
            D_f(P||Q) \leq \int_{\mathcal{X}} \left( \int_\theta f \left(\frac{dP_\theta}{dQ_\theta} \right) \nu_\theta d \theta -
            \frac{\kappa}{2} \int_\Theta \left( \frac{dP_\theta}{dQ_\theta} -
             \int_\Theta \frac{dP_\theta}{dQ_\theta} \nu_\theta d\theta \right)^2 \nu_\theta d\theta \right) dQ
        \end{align}
            Note that
        \begin{align*}
           \int_{\mathcal{X}} \int_\Theta \left( \frac{dP_\theta}{dQ_\theta} dQ-
             \int_\Theta \frac{dP_\theta}{dQ_{\theta_0}} \nu_{\theta_0} d{\theta_0} \right)^2 \nu_\theta d\theta dQ
                = 
                \int_\Theta \int_{\mathcal{X}} \left( \frac{dP_\theta}{dQ_\theta} -
                 \frac{dP}{dQ}  \right)^2 dQ d\mu,
        \end{align*}
            and
        \begin{align}
                    \int_{\mathcal{X}} \int_\Theta f\left(  \frac{dP_\theta}{dQ_\theta} \right) \nu(\theta,x) d\theta  dQ
                &=
                    \int_\Theta \int_{\mathcal{X}}  f\left(  \frac{dP_\theta}{dQ_\theta} \right) \nu(\theta,x) dQ d\theta  
                        \\
                &=
                    \int_\Theta \int_{\mathcal{X}}  f\left(  \frac{dP_\theta}{dQ_\theta} \right)  dQ_\theta d\mu(\theta) 
                        \\
                &=
                    \int_\Theta D(P_\theta || Q_\theta) d\mu(\theta) 
        \end{align}
            Inserting these equalities into \eqref{eq: inequality we want} gives the result. \\
            To obtain the total variation bound one needs only to apply Jensen's inequality,
            \begin{align}
                \int_\mathcal{X} \left( \frac{d P_\theta}{dQ} - \frac{dP}{dQ} \right)^2 dQ
                    &\geq
                        \left(\int_\mathcal{X} \left| \frac{d P_\theta}{dQ} - \frac{dP}{dQ} \right| dQ \right)^2
                            \\
                    &=
                        |P_\theta - P|^2_{TV}.
            \end{align}
    \end{proof}
Observe that taking $Q = P = \int_\Theta  P_\theta d \mu(\theta)$ in Proposition \ref{prop: kappa convexity of kappa convex divergences continuous}, one obtains a lower bound for the average $f$-divergence from the set of distribution to their barycenter,
by the mean square total variation of the set of distributions to the barycenter,
\begin{align} \label{eq: Reverse Pinsker for Barycenter}
    \kappa \int_{\Theta} |P_\theta - P|^2_{TV} d\mu(\theta) \leq \int_{\Theta} D_f(P_\theta|| P) d\mu(\theta).
\end{align}

    The next result shows that for $f$ strongly convex, Pinsker type inequalities can never be reversed, 
\begin{prop} \label{prop: no reverse pinskers for strongly convex}
    Given $f$ strongly convex and $M >0$, there exists $P$, $Q$ measures such that
    \begin{align}
        D_f(P||Q) \geq M |P-Q|_{TV}.
    \end{align}
\end{prop}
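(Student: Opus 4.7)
The plan is to reduce the statement to an inequality for the $\chi^2$-divergence via Theorem \ref{thm: chi square divergence is smallest strongly convex divergence}, and then exhibit an explicit sequence of pairs $(P, Q_n)$ along which $\chi^2(P || Q_n) \to \infty$ while $|P - Q_n|_{TV}$ stays bounded. Since $f$ is $\kappa$-convex with $\kappa > 0$, that theorem gives $D_f(P || Q) \geq \frac{\kappa}{2}\chi^2(P || Q)$ for all $P, Q$, so it suffices to produce a single pair with $\frac{\kappa}{2}\chi^2(P || Q) \geq M |P-Q|_{TV}$.

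Taking $\mathcal{X} = \{0,1\}$ with $Q_n(\{1\}) = 1/n$ and $Q_n(\{0\}) = 1 - 1/n$, and letting $P$ be the point mass at $1$, a direct two-term computation yields $\chi^2(P || Q_n) = (n-1)^2/n + (n-1)/n = n - 1$ together with $|P - Q_n|_{TV} = 1 - 1/n \leq 1$. Any integer $n$ with $n \geq 1 + 2M/\kappa$ then gives $D_f(P || Q_n) \geq \frac{\kappa}{2}(n-1) \geq M \geq M |P - Q_n|_{TV}$, as required.

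The one point that deserves flagging, rather than a real obstacle, is that the application of Theorem \ref{thm: chi square divergence is smallest strongly convex divergence} uses $\kappa$-convexity of $f$ on the full image of $dP/dQ_n$, which here includes the unbounded value $n$. I therefore read ``strongly convex'' in the hypothesis as $\kappa$-convex on all of $(0,\infty)$, in line with the unqualified statement of that theorem. A cruder alternative that avoids any domain concern is to observe that strong convexity on $(0,\infty)$ forces $f(t)/t \to \infty$, hence $f^*(0) = \infty$ in the extended definition at the beginning of the paper, so any mutually singular pair $P, Q$ immediately gives $D_f(P || Q) = \infty$ while $|P - Q|_{TV} \leq 1$.
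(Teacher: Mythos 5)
Your proof is correct, and it takes a slightly different route from the paper's. The paper argues directly from $\kappa$-convexity that $\phi(t) = f(t) - \kappa t^2/2$ is convex, deduces $\lim_{t\to\infty} f(t)/t = \infty$, and then uses the two-point pair $P = \{1/2, 1/2\}$, $Q = \{1/2t, 1 - 1/2t\}$ to get $D_f(P||Q) \geq \tfrac{1}{2}\tfrac{f(t)}{t} \to \infty$ while $|P-Q|_{TV} \leq 1$. You instead invoke Theorem \ref{thm: chi square divergence is smallest strongly convex divergence} as a black box to reduce everything to an explicit $\chi^2$ computation on a two-point space, which is a clean reuse of machinery already established in the paper and gives a fully quantitative choice of $n$ in terms of $M$ and $\kappa$; the paper's version re-derives the superlinear growth of $f$ from scratch but keeps the argument self-contained. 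Your computation $\chi^2(P||Q_n) = n-1$ and $|P-Q_n|_{TV} = 1 - 1/n$ checks out. Your reading of ``strongly convex'' as $\kappa$-convex on all of $(0,\infty)$ is the intended one --- the paper's own proof also uses convexity of $\phi$ on the whole half-line and sends $t \to \infty$ --- so the flag you raise is resolved correctly. Your ``cruder alternative'' ($f^*(0) = \infty$ for strongly convex $f$, hence mutually singular pairs give $D_f = \infty$) is essentially the observation the paper itself makes immediately after the proposition, via the Sason--Verd\'u identity $\sup_{P \neq Q} D_f(P||Q)/|P-Q|_{TV} = f(0) + f^*(0)$.
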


\begin{proof}
By $\kappa$-convexity $\phi(t) = f(t) - \kappa t^2/2$ is a convex function.  Thus $\phi(t) \geq \phi(1) + \phi'_+(1)(t-1) = (f'_+(1) - \kappa)(t-1)$ and hence
$
   \lim_{t \to \infty} \frac{f(t)}{t} \geq \lim_{t \to \infty} \kappa t/2 + (f'_+(1) - \kappa)\left(1-\frac{1}{t} \right) = \infty.
$
Taking measures on the two points space $P = \{1/2,1/2\}$ and $Q = \{ 1/2t, 1- 1/2t\}$ gives
$
    D_f(P||Q) \geq \frac 1 2 \frac{f(t)}{t} 
$
which tends to infinity with $t \to \infty$, while $|P-Q|_{TV}\leq 1$.

\end{proof}

In fact, building on the work of \cite{basu2011statistical, LV06:1},  Sason and Verdu proved in \cite{sason2016f}, that for any $f$ divergence, $\sup_{P \neq Q} \frac{D_f(P||Q)}{|P-Q|_{TV}} = f(0) + f^*(0)$. 
Thus, an $f$-divergence can be bounded above by a constant multiple of a the total variation, if and only if $f(0) + f^*(0) < \infty$.  From this perspective, Proposition \ref{prop: no reverse pinskers for strongly convex}
is simply the obvious fact that strongly convex functions have super linear (at least quadratic) growth at infinity.

\section{Skew divergences} \label{sec: Skew divergences}
If we denote $Cvx(0,\infty)$ to be quotient of the cone of convex functions $f$ on $(0,\infty)$ such that $f(1) = 0$ under the equivalence relation $f_1 \sim f_2$ 
when $f_1- f_2 = c(x-1)$ for $c \in \mathbb{R}$, then the map $f \mapsto D_f(\cdot|| \cdot )$ gives 
a linear isomorphism between $Cvx(0,\infty)$ and the space of all $f$-divergences.  The mapping $\mathcal{T}: Cvx(0,\infty) \to Cvx(0,\infty)$ defined
by $\mathcal{T}f = f^*$, where we recall $f^*(t) = t f(t^{-1})$, gives an involution of $Cvx(0,\infty)$.  Indeed, $D_{\mathcal{T}f}(P||Q) = D_f(Q||P)$,
so that $D_{\mathcal{T}(\mathcal{T}(f))}(P||Q) = D_f(P||Q)$.  Mathematically, skew divergences give an interpolation of this involution as 
$$(P,Q) \mapsto D_f((1-t)P + t Q || (1-s)P + s Q)$$ gives $D_f(P||Q)$ by taking $s = 1$ and $t = 0$ or yields $D_{f^*}(P||Q)$ by taking
$s=0$ and $t= 1$.  

Moreover as mentioned in the introduction, skewing imposes boundedness of the Radon-Nikodym derivative $\frac{dP}{dQ}$, which
allows us to constrain the domain of $f$-divergences and leverage $\kappa$-convexity to obtain $f$-divergence inequalities in this
section.

The following appears as Theorem III.1 in the preprint \cite{melbourne2020differential}.
It states that skewing an $f$-divergence preserves its status as such. This guarantees that the generalized skew divergences of this section are
indeed $f$-divergences. A proof is given in the
appendix for the convenience of the reader. 

\begin{thm}[Melbourne et al \cite{melbourne2020differential}] \label{thm: skewing preserves f-divergence}
    For $t,s \in [0,1]$ and an $f$-divergence, $D_f(\cdot || \cdot )$, in the sense that
    \begin{align}   
        S_f(P||Q) \coloneqq D_f( (1-t) P + tQ || (1-s)P + sQ)
    \end{align}
    is an $f$-divergence if $D_f$ is.
\end{thm}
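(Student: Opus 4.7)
The plan is to exhibit an explicit convex function $g$ with $g(1)=0$ such that $S_f(P\|Q) = D_g(P\|Q)$, and then verify its convexity via a perspective-function argument.

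First, I would choose a common dominating measure $\mu$ (say $\mu = P+Q$) with densities $p$ and $q$. Writing out the definition and factoring, the integrand of $S_f(P\|Q)$ can be rewritten using $r = p/q$ as
\begin{align*}
    \Bigl((1-s)p + sq\Bigr)\, f\!\left(\frac{(1-t)p + tq}{(1-s)p + sq}\right)
    = \Bigl((1-s)r + s\Bigr)\, f\!\left(\frac{(1-t)r + t}{(1-s)r + s}\right)\, q.
\end{align*}
This suggests defining, for $r \in (0,\infty)$,
\begin{align*}
    g(r) \coloneqq \Bigl((1-s)r + s\Bigr)\, f\!\left(\frac{(1-t)r + t}{(1-s)r + s}\right),
\end{align*}
with the usual conventions at the boundary (extending via $0 \cdot f(0/0) = 0$ and $0 \cdot f(a/0) = a\lim_{u \to \infty}f(u)/u$ when $s = 0$ and $r = 0$). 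By construction, at $r = 1$ both numerator and denominator equal $1$, so $g(1) = 1 \cdot f(1) = 0$.

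Next I would check convexity of $g$. The key fact is that the perspective of a convex function $f$, namely $(x,y) \mapsto y\, f(x/y)$ on $(0,\infty)\times(0,\infty)$, is jointly convex. Setting $x(r) = (1-t)r + t$ and $y(r) = (1-s)r + s$, these are affine maps of $r$ with nonnegative range on $[0,\infty)$, so $g(r) = y(r)\, f(x(r)/y(r))$ is a composition of a jointly convex function with an affine map, hence convex on $(0,\infty)$. The boundary conventions match up with the standard extensions of perspective functions, so convexity extends to $[0,\infty)$ in the appropriate lower-semicontinuous sense.

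Finally I would verify that $\int g(p/q)\, q\, d\mu$ recovers $S_f(P\|Q)$, including on the singular sets $\{q = 0\}$ and $\{p=0\}$. On $\{q=0\}$, one uses $0\cdot g(a/0) = a\lim_{r \to \infty} g(r)/r = a \cdot (1-s) \cdot \lim_{u \to (1-t)/(1-s)} f(u)$ type formulas that match the contribution from the corresponding mixture densities in the original definition of $D_f$; the case $\{p=0\}$ is symmetric. Once this bookkeeping is done, $S_f(P\|Q) = D_g(P\|Q)$ is immediate, and $g$ is a valid generator of an $f$-divergence. The main technical nuisance is only this last boundary bookkeeping; the substantive content of the proof is the perspective-function convexity argument.
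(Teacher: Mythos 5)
Your proposal is correct and follows essentially the same route as the paper's appendix proof: both exhibit the generator $g(r) = ((1-s)r+s)\,f\!\left(\frac{(1-t)r+t}{(1-s)r+s}\right)$ and deduce its convexity by writing it as the composition of the jointly convex perspective (conic transform) $(x,y)\mapsto y f(x/y)$ with an affine map of $r$. The extra care you take with the boundary sets $\{p=0\}$ and $\{q=0\}$ is a reasonable addition, but the substantive argument is identical.
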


\begin{defn} \label{def: Symmetrization divergence}
    For an $f$-divergence, its skew symmetrization,
    \begin{align*}
        \Delta_f(P||Q) 
            \coloneqq 
                \frac 1 2 D_f \left( P \bigg|\bigg|\frac{P +Q}{2} \right) +
                     \frac 1 2 D_f\left(Q \bigg|\bigg|\frac{P+Q}{2} \right).
    \end{align*} 
\end{defn}
 $\Delta_f$ is determined by the convex function 
 \begin{align} \label{eq: formula for skew symmetric f}
    x \mapsto \frac{1+x}{2} \left( f \left( \frac{2x}{1+x} \right) + f \left( \frac{2}{1+x} \right) \right).
 \end{align}
 Observe that $\Delta_f(P||Q) = \Delta_f(Q||P)$, 
 and when $f(0)< \infty$, $\Delta_f(P||Q) \leq \sup_{x \in [0,2]} f(x) < \infty$ for all $P,Q$ since $\frac{ dP}{d(P+Q)/2}$, $\frac{ dQ}{d(P+Q)/2} \leq 2$.
When $f(x) = x\log x$, the relative entropy's skew symmetrization is the Jensen-Shannon divergence.  When 
$f(x) = (x-1)^2$ up to a normalization constant the $\chi^2$-divergence's skew symmetrization is the Vincze-Le Cam divergence which we state below for emphasis. See \cite{topsoe2000some} for more
background on this divergence, where it is referred to as the triangular discrimination.  
\begin{defn} \label{def: Triangular discrimination}
    When $f(t) = \frac{(t-1)^2}{t+1}$ denote the Vincze-Le Cam divergence by
    \begin{align*}
        \Delta(P||Q) \coloneqq D_f(P||Q).
    \end{align*}
\end{defn}
If one denotes the skew symmetrization of the $\chi^2$-divergence by $\Delta_{\chi^2}$, one can compute easily from \eqref{eq: formula for skew symmetric f} that $\Delta_{\chi^2}(P||Q) = 
\Delta(P||Q)/2$.    We note that although skewing preserves $0$-conexity, by the above example, it does not preserve $\kappa$-convexity in general. The skew symmetrization of the $\chi^2$-divergence a $2$-convex divergence while $f(t) = (t-1)^2/(t+1)$ corresponding to the
Vincze-Le Cam divergence satisfies $f''(t) = \frac 8 {(t+1)^3}$, which
cannot be bounded away from zero on $(0,\infty)$.

\begin{cor} \label{cor: Triangular discrimination as smallest symmetrized}
    For an $f$-divergence such that $f$ is a $\kappa$-convex on $(0,2)$, 
    \begin{align}  \label{eq: chi square give smallest skew symmetrized divergence}
        \Delta_f(P||Q) \geq  \frac \kappa 4 \Delta(P||Q) = \frac \kappa 2 \Delta_{\chi^2}(P||Q),
    \end{align}
    with equality when the $f(t) = (t-1)^2$ corresponding the the $\chi^2$-divergence,
    where $\Delta_f$ denotes the skew symmetrized divergence associated to $f$ and $\Delta$ is the Vincze- Le Cam divergence.
\end{cor}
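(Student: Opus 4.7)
The plan is to reduce the statement to a termwise application of Theorem \ref{thm: chi square divergence is smallest strongly convex divergence}. The key observation is that the Radon-Nikodym derivatives $\frac{dP}{d((P+Q)/2)}$ and $\frac{dQ}{d((P+Q)/2)}$ are each pointwise bounded above by $2$, so the hypothesis that $f$ is $\kappa$-convex on $(0,2)$ is exactly what is needed to invoke the $\chi^2$ lower bound on each of the two summands in the definition of $\Delta_f$.

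Concretely, I first observe that the proof of Theorem \ref{thm: chi square divergence is smallest strongly convex divergence} uses $\kappa$-convexity only to establish the pointwise inequality $f(t) \geq \kappa(t-1)^2/2$; that conclusion therefore holds on any interval of $\kappa$-convexity, so on $(0,2)$ in our setting, and the same integration argument yields $D_f(P || R) \geq \frac{\kappa}{2}\chi^2(P || R)$ whenever $dP/dR$ takes values in $(0,2)$. Applying this twice with $R = (P+Q)/2$ gives
\begin{align*}
    D_f(P || (P+Q)/2) &\geq \tfrac{\kappa}{2}\, \chi^2(P || (P+Q)/2), \\
    D_f(Q || (P+Q)/2) &\geq \tfrac{\kappa}{2}\, \chi^2(Q || (P+Q)/2).
\end{align*}
Averaging these inequalities and invoking the already-noted identity $\Delta_{\chi^2}(P||Q) = \Delta(P||Q)/2$ produces both forms of the stated lower bound.

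For the equality case, when $f(t) = (t-1)^2$ one has $\kappa = 2$, and $\Delta_f$ coincides with $\Delta_{\chi^2}$ by definition, so the inequality collapses to $\Delta_{\chi^2} \geq \Delta_{\chi^2}$, which holds with equality. The only mildly delicate point in the argument is the passage from global to local $\kappa$-convexity in Theorem \ref{thm: chi square divergence is smallest strongly convex divergence}, which I do not anticipate as a serious obstacle but flag for explicit verification, since the hypotheses of Corollary \ref{cor: Triangular discrimination as smallest symmetrized} only guarantee $\kappa$-convexity on $(0,2)$ rather than on all of $(0,\infty)$.
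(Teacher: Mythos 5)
Your proof is correct, but it takes a genuinely different route from the paper. The paper does not bound each summand of $\Delta_f$ separately; instead it applies Proposition \ref{prop: kappa convexity of kappa convex divergences continuous} to the trivial identity $0 = D_f\left(\frac{P+Q}{2}\,\big|\big|\,\frac{P+Q}{2}\right)$, decomposing the first argument as $\frac{1}{2}P + \frac{1}{2}Q$: the Jensen-deficit (variance) term in that proposition evaluates exactly to $\frac{\kappa}{8}\int\left(\frac{2p}{p+q}-\frac{2q}{p+q}\right)^2\frac{p+q}{2}\,d\mu = \frac{\kappa}{4}\Delta(P||Q)$, which yields the bound in one line. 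Your argument instead applies the localized form of Theorem \ref{thm: chi square divergence is smallest strongly convex divergence} termwise to $D_f(P||\frac{P+Q}{2})$ and $D_f(Q||\frac{P+Q}{2})$ and averages. The point you flag — that the proof of Theorem \ref{thm: chi square divergence is smallest strongly convex divergence} only needs $\kappa$-convexity on an interval containing the range of the Radon--Nikodym derivative — is indeed harmless: after the affine normalization $f'_+(1)=0$, the function $\phi(t)=f(t)-\kappa(t-1)^2/2$ is convex on $(0,2)$ with $\phi'_+(1)=0$ and $\phi(1)=0$, so $\phi\geq 0$ there, and $\frac{dP}{d(P+Q)/2}\in[0,2]$ so the integration never leaves that interval (the $f^*(0)$ boundary term vanishes since $(P+Q)/2=0$ forces $P=0$). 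Both routes produce the same constant $\frac{\kappa}{4}$ — unsurprisingly, since the $\chi^2$-divergence is the quadratic case and both bounds are tight to second order at $t=1$. What your version buys is a more elementary argument (a pointwise inequality plus integration, with no appeal to the mixture machinery) that would apply verbatim to any non-symmetric skewing $S_{\alpha,\beta}$ term by term; what the paper's version buys is a conceptual explanation of why the Vincze--Le Cam divergence appears, namely as the Jensen deficit of the convex decomposition $\frac{P+Q}{2}=\frac{1}{2}P+\frac{1}{2}Q$, in the same spirit as the proof of Theorem \ref{thm: generalized chi square upper bounded by generalized jensens shannon}.
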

\begin{proof}
    Applying Proposition \ref{prop: kappa convexity of kappa convex divergences continuous}
    \begin{align*}
        0 
            &=
                D_f\left( \frac{P+Q}{2}\bigg|\bigg| \frac{Q+P}{2} \right)
                    \\
            &\leq
                \frac{1}{2} D_f\left( P\bigg|\bigg| \frac{Q+P}{2} \right)+ \frac 1 2D_f\left( Q \bigg|\bigg| \frac{Q+P}{2} \right) 
                    - \frac{\kappa}{8}  \int \left( \frac{2P}{P+Q} - \frac{2Q}{P + Q} \right)^2 d(P+Q)/2
                        \\
            &=
                \Delta_f(P ||Q) - \frac{\kappa}{4} \
                \Delta(P||Q).
    \end{align*}
\end{proof}
When $f(x) = x \log x$, we have $f''(x) \geq \frac{\log e} 2$ on $[0,2]$, which demonstrates that up to a constant $\frac {\log e} 8$ the Jensen-Shannon
divergence bounds the Vincze-Le Cam divergence. See \cite{topsoe2000some} for improvement of the inequality  in the case of the Jensen-Shannon divergence, called the ``capacitory discrimination''
in the reference, by a factor of $2$.


We will now investigate more general, non-symmetric skewing in what follows.

\begin{prop} \label{prop: generalized Jensen-Shannon bound}
    For $\alpha, \beta \in [0,1]$, define 
    \begin{align}
        C(\alpha) \coloneqq \begin{cases} 1-\alpha & \hbox{ when }  \alpha \leq \beta\\ \alpha & \mbox{ when } \alpha > \beta,
        \end{cases}
    \end{align}
    and
    \begin{align}
        S_{\alpha,\beta}(P || Q) \coloneqq  D((1-\alpha)P+ \alpha Q|| (1-\beta) P + \beta Q).
    \end{align}
    Then
    \begin{align}
        S_{\alpha,\beta}(P ||Q) \leq C(\alpha) D_\infty(\alpha||\beta) | P - Q |_{TV} 
    \end{align}
\end{prop}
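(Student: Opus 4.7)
The plan is to exploit the convexity of $f(t) = t\log t$ via two separate chord (secant) upper bounds on the subintervals on which the Radon-Nikodym derivative $r/s$ takes its values. By the symmetry $S_{\alpha,\beta}(P\|Q) = S_{1-\alpha,\,1-\beta}(Q\|P)$, I may reduce to the case $\alpha \leq \beta$, for which $C(\alpha) = 1-\alpha$ and $D_\infty(\alpha\|\beta) = \log\frac{1-\alpha}{1-\beta}$; the complementary case then follows by swapping $P$ and $Q$.

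Write $R = (1-\alpha)P + \alpha Q$ and $S = (1-\beta)P + \beta Q$, with densities $r,s,p,q$ against a common dominating measure. Two observations drive the argument. First, $r - s = (\beta - \alpha)(p - q)$, so $r \geq s$ holds exactly on $\{p \geq q\}$, giving the identification $\int_{p \geq q}(r-s)\,d\mu = (\beta - \alpha)|P - Q|_{TV}$ from the definition of total variation. Second, the ratio $\rho := r/s$ lies in $[m, M]$ with $m = \alpha/\beta$ (attained when $q = 0$) and $M = (1-\alpha)/(1-\beta)$ (attained when $p = 0$). Since $f$ is convex with $f(1) = 0$, the chord bounds
\[
f(\rho) \leq \frac{\rho - 1}{M - 1}f(M) \text{ on } [1, M], \qquad f(\rho) \leq \frac{1 - \rho}{1 - m}f(m) \text{ on } [m, 1]
\]
hold, and crucially the second bound is nonpositive because $f(m) = m\log m \leq 0$ whenever $m \leq 1$.

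To finish, I split $D(R\|S) = \int s f(\rho)\,d\mu$ across $\{p \geq q\}$ and $\{p < q\}$, apply the corresponding chord on each piece, discard the nonpositive contribution from the latter, and evaluate the former:
\[
\int_{p \geq q}\! s\,\frac{\rho - 1}{M - 1}f(M)\,d\mu = \frac{M\log M}{M - 1}(\beta - \alpha)|P - Q|_{TV} = (1-\alpha)\log\frac{1-\alpha}{1-\beta}\,|P - Q|_{TV},
\]
using the simplification $M/(M - 1) = (1-\alpha)/(\beta - \alpha)$. This gives exactly the desired $C(\alpha) D_\infty(\alpha\|\beta)|P - Q|_{TV}$. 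The main point of care is that a single global estimate $|\log \rho| \leq \max(\log M, -\log m)$ is wasteful, since $-\log m$ may exceed $\log M$; splitting into the two regions is what permits one to discard the $[m,1]$ piece and recover the asymmetric constant $C(\alpha)$.
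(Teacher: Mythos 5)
Your proof is correct, and it takes a genuinely different route from the paper's. The paper reduces the general case to the special skewing $D(R\,||\,tR+(1-t)Q)$ by rewriting $(1-\beta)P+\beta Q$ as a mixture of $(1-\alpha)P+\alpha Q$ with one of the endpoints (choosing the endpoint according to whether $\alpha\leq\beta$ or $\alpha>\beta$), and then invokes the Audenaert-type bound $S_{0,t}\leq -\log t\,|P-Q|_{TV}$ as a black box from \cite{melbourne2020differential}. You instead prove the estimate from scratch: after the symmetry reduction $S_{\alpha,\beta}(P\|Q)=S_{1-\alpha,1-\beta}(Q\|P)$ (which plays the same role as the paper's two-case split), you confine $\rho=r/s$ to $[\alpha/\beta,\,(1-\alpha)/(1-\beta)]$ by the mediant inequality, apply the secant bound for $t\log t$ separately on $\{p\geq q\}$ and $\{p<q\}$, and discard the nonpositive piece. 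The identities $r-s=(\beta-\alpha)(p-q)$ and $M/(M-1)=(1-\alpha)/(\beta-\alpha)$ check out, and the constant $(1-\alpha)\log\frac{1-\alpha}{1-\beta}$ emerges exactly. What your argument buys is self-containedness — it subsumes the cited lemma rather than depending on it — and it makes transparent why the asymmetric constant $C(\alpha)$ appears (only the region where $\rho\geq 1$ contributes). What the paper's route buys is brevity and modularity: once the one-parameter lemma is available, the general case is a two-line algebraic reduction. The only things worth adding to your write-up are the degenerate cases ($\alpha=\beta$, where both sides vanish, and $\beta=1>\alpha$, where $D_\infty=\infty$), which need a sentence so that the divisions by $M-1$ and $1-m$ are legitimate.
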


We will need the following lemma originally proved by Audenart in the quantum setting 
\cite{audenaert2014quantum}.  It is based on 
a diffential relationship between the skew divergence \cite{lee1999measures} and the \cite{gyorfi2001class}, see \cite{melbourne2019relationships,nishiyama2020relations}.
\begin{lem} [Theorem III.1 \cite{melbourne2020differential}] \label{lem: our lemma on skew divergence bounded by total variation}
   For $P$ and $Q$ probability measures, and $t \in [0,1]$
       \begin{align}
           S_{0,t}(P||Q) \leq -\log t | P - Q |_{TV}.
       \end{align}
\end{lem}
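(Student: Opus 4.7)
The plan is to prove the lemma by a one-parameter differentiation argument in the mixing parameter. Set $\psi(t) \coloneqq S_{0,t}(P||Q) = D(P \,||\, (1-t)P + tQ)$. I would first verify that $\psi$ vanishes at the endpoint where the two arguments of the KL divergence coincide, then bound the derivative $\psi'(t)$ pointwise, and finally integrate that bound in $t$ to recover the logarithmic factor appearing in the statement.

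Let $p,q$ be densities of $P,Q$ with respect to a common reference measure $\mu$. Differentiating under the integral sign, one obtains
\begin{align*}
\psi'(t) = \int \frac{p(p-q)}{(1-t)p + tq}\, d\mu.
\end{align*}
On $\{p \leq q\}$ the integrand is nonpositive and can be discarded in an upper bound, while on $\{p > q\}$ the elementary inequality $(1-t)p + tq \geq (1-t)p$ yields the pointwise estimate $\frac{p}{(1-t)p + tq} \leq \frac{1}{1-t}$. Combining these gives
\begin{align*}
\psi'(t) \leq \frac{1}{1-t} \int_{\{p > q\}}(p-q)\, d\mu = \frac{|P-Q|_{TV}}{1-t}.
\end{align*}
Integrating $\psi'$ in $t$ from the endpoint at which $\psi$ vanishes then produces $|P-Q|_{TV}$ multiplied by the logarithmic factor claimed in the lemma.

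The main technical obstacle is justifying the interchange of differentiation and integration, since the integrand $p\log \tfrac{p}{(1-t)p + tq}$ can be singular on the set where $q$ vanishes as $t$ approaches the bad endpoint. I would address this by first restricting $t$ to a compact subinterval bounded away from that endpoint, on which the pointwise bound $\tfrac{p}{(1-t)p+tq} \leq (1-t)^{-1}$, combined with the sign-splitting used above, furnishes an integrable dominator that is uniform in $t$ throughout the subinterval; this legitimizes dominated convergence and the differentiation step. Extension up to the full interval then follows by a standard monotonicity or limiting argument, at which point the final integration is immediate.
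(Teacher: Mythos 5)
The paper itself contains no proof of this lemma: it is quoted from Audenaert and from \cite{melbourne2020differential}, and the surrounding text says the cited proof rests on a differential relationship in the skew parameter, which is exactly the route you take, so your approach matches the source's in spirit. Your calculus is sound: with $\psi(t)=\int p\log\frac{p}{(1-t)p+tq}\,d\mu$ you have $\psi(0)=0$ and $\psi'(t)=\int\frac{p(p-q)}{(1-t)p+tq}\,d\mu$; discarding $\{p\le q\}$ and using $(1-t)p+tq\ge(1-t)p$ gives $\psi'(t)\le |P-Q|_{TV}/(1-t)$ with the paper's normalization $|P-Q|_{TV}=\int_{\{p>q\}}(p-q)\,d\mu$; and the differentiation under the integral is easier than you suggest, since $\frac{p|p-q|}{(1-t)p+tq}\le\frac{p+q}{1-t}\le\frac{p+q}{1-b}$ uniformly on $t\in[0,b]$ for any $b<1$, while at $t=1$ no limiting argument is needed because the bound you are proving is vacuous there. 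The one point you must not gloss over is the parametrization. Integrating your derivative bound from $0$ to $t$ yields
\begin{align*}
D\bigl(P\,\big\|\,(1-t)P+tQ\bigr)\;\le\;-\log(1-t)\,|P-Q|_{TV},
\end{align*}
so the factor your argument produces is $-\log(1-t)$, not the $-\log t$ you assert is ``the logarithmic factor claimed in the lemma.'' Equivalently, $D(P\,\|\,tP+(1-t)Q)\le-\log t\,|P-Q|_{TV}$, and this is the form actually invoked in the proof of Proposition \ref{prop: generalized Jensen-Shannon bound}, where the weight $t$ multiplies the first measure; read literally against the definition of $S_{\alpha,\beta}$ there, the displayed lemma has a $t\leftrightarrow 1-t$ slip (at $t=1$ it would assert $D(P\|Q)\le 0$). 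So state explicitly which inequality your integration gives and perform the substitution $t\mapsto 1-t$ rather than claiming the displayed factor appears; with that bookkeeping made explicit your proof is complete, and the mutually singular case $P\perp Q$ shows the constant is sharp.
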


\begin{proof}[Proof of Theorem \ref{prop: generalized Jensen-Shannon bound}]
   If $\alpha \leq \beta$, then $D_\infty( \alpha || \beta) = \log \frac{1-\alpha}{1-\beta}$ and $C(\alpha) = 1-\alpha$.  Also,
   \begin{align}
       (1-\beta) P + \beta Q = t \left( (1-\alpha) P + \alpha Q\right) + (1-t) Q
   \end{align}
   with $t = \frac{1-\beta}{1-\alpha}$, thus
   \begin{align}
       S_{\alpha,\beta}(P||Q) 
           &=
               S_{0,t}((1-\alpha) P + \alpha Q || Q)
                   \\
           &\leq 
               - \log t | ((1-\alpha) P + \alpha Q) - Q |_{TV}
                   \\
           &=
               C(\alpha) D_\infty(\alpha|| \beta) |P - Q|_{TV},
   \end{align}
    where the inequality follows from Lemma \ref{lem: our lemma on skew divergence bounded by total variation}. 
    Following the same argument for $\alpha > \beta$, so that $C(\alpha) = \alpha$, $D_\infty(\alpha || \beta) = \log \frac \alpha \beta$, and
    \begin{align}
        (1-\beta) P + \beta Q = t \left( (1-\alpha) P + \alpha Q\right) + (1-t) P
    \end{align}
       for $t = \frac{\beta}{\alpha}$ completes the proof.  Indeed,
       \begin{align}
           S_{\alpha,\beta}(P||Q) 
           &=
               S_{0,t}((1-\alpha) P + \alpha Q || P)
                   \\
           &\leq 
               - \log t | ((1-\alpha) P + \alpha Q) - P |_{TV}
                   \\
           &=
               C(\alpha) D_\infty(\alpha|| \beta) |P - Q|_{TV}.
       \end{align}
\end{proof}

We recover the classical bound \cite{lin1991divergence, topsoe2000some} of the Jensen-Shannon divergence by the total variation.

\begin{cor}
    For probability measure $P$ and $Q$,
    \begin{align}
        JSD(P||Q) \leq \log 2 |P-Q|_{TV}
    \end{align}
\end{cor}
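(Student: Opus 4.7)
The plan is to recognize the Jensen-Shannon divergence as an average of two skew divergences of the form $S_{\alpha,\beta}$ appearing in Proposition \ref{prop: generalized Jensen-Shannon bound}, and then apply that proposition twice.

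First, I would unpack the definition of $JSD$ from the body of the paper (the skew symmetrization of the relative entropy when $f(x) = x \log x$) to write
\begin{align*}
    JSD(P||Q) = \tfrac{1}{2} D\!\left(P \,\big\|\, \tfrac{P+Q}{2}\right) + \tfrac{1}{2} D\!\left(Q \,\big\|\, \tfrac{P+Q}{2}\right) = \tfrac{1}{2} S_{0,1/2}(P||Q) + \tfrac{1}{2} S_{1,1/2}(P||Q),
\end{align*}
using the identifications $P = (1-0)P + 0 Q$, $Q = (1-1)P + 1 Q$, and $\tfrac{P+Q}{2} = (1-\tfrac12) P + \tfrac12 Q$.

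Next I would apply Proposition \ref{prop: generalized Jensen-Shannon bound} to each term. For $(\alpha,\beta) = (0, 1/2)$, we have $\alpha \leq \beta$, so $C(\alpha) = 1 - 0 = 1$ and $D_\infty(0||1/2) = \log \tfrac{1-0}{1-1/2} = \log 2$, giving $S_{0,1/2}(P||Q) \leq \log 2 \, |P-Q|_{TV}$. For $(\alpha,\beta) = (1, 1/2)$, we have $\alpha > \beta$, so $C(\alpha) = 1$ and $D_\infty(1||1/2) = \log \tfrac{1}{1/2} = \log 2$, giving $S_{1,1/2}(P||Q) \leq \log 2\, |P-Q|_{TV}$.

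Averaging the two bounds yields $JSD(P||Q) \leq \log 2 \, |P-Q|_{TV}$, which is the desired inequality. There is essentially no obstacle here: the entire content is bookkeeping to check that the proposition's constants $C(\alpha)$ and $D_\infty(\alpha||\beta)$ both evaluate to the clean values $1$ and $\log 2$ at the symmetric endpoints $\alpha \in \{0,1\}$, $\beta = 1/2$; the only mild subtlety is keeping the two cases ($\alpha \leq \beta$ versus $\alpha > \beta$) straight so that the correct branch of $C$ and $D_\infty$ is invoked for each summand.
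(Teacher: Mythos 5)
Your proof is correct and is essentially identical to the paper's own (very terse) argument: the paper's proof also writes $JSD(P||Q) = \tfrac12 S_{0,1/2}(P||Q) + \tfrac12 S_{1,1/2}(P||Q)$ and invokes Proposition \ref{prop: generalized Jensen-Shannon bound}, with the constants $C(\alpha)D_\infty(\alpha||\beta) = \log 2$ in both cases exactly as you computed.
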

\begin{proof}
    Since $JSD(P||Q) = \frac 1 2 S_{0,\frac 1 2}(P ||Q) + \frac 1 2 S_{1, \frac 1 2}(P||Q)$
\end{proof}

Proposition \ref{prop: generalized Jensen-Shannon bound} gives a sharpening of Lemma 1 of Neilsen \cite{nielsen2020generalization} 
who proved $S_{\alpha,\beta}(P || Q) \leq D_\infty(\alpha || \beta)$, 
and used the result to establish the boundedness of a generalization of the Jensen-Shannon Divergence.  

\begin{defn}[Nielsen \cite{nielsen2020generalization}]
    For $p$ and $q$ densities with respect to a reference measure $\mu$, $w_i > 0$, such that $\sum_{i=1}^n w_i = 1$ and $\alpha_i \in [0,1]$, define
    \begin{align}
        JS^{\alpha,w}(p : q) = \sum_{i=1}^n w_i D( (1-\alpha_i) p + \alpha_i q || (1- \bar{\alpha} ) p + \bar{\alpha} q)
    \end{align}
    where $\sum_{i=1}^n w_i \alpha_i = \bar{\alpha}$.
\end{defn}
Note that when $n=2$, $\alpha_1 = 1$, $\alpha_2 = 0$ and $w_i = \frac 1 2$ that $JS^{\alpha,w}(p : q) = JSD(p||q)$, the usual Jensen-Shannon divergence.
We now demonstrate that Neilsen's generalized Jensen-Shannon Divergence can be bounded by the total variation distance just as the ordinary Jensen-Shannon Divergence.

\begin{thm}\label{thm: generalized Jensen Shannon total variation bound}
   For $p$ and $q$ densities with respect to a reference measure $\mu$,  $w_i > 0$, such that $\sum_{i=1}^n w_i = 1$ and $\alpha_i \in (0,1)$ then,
   \begin{align}
       \log e \Var_w(\alpha) | p - q|^2_{TV} \leq JS^{\alpha,w}(p : q) \leq \mathcal{A} H(w) | p - q |_{TV}
   \end{align}
   where $H(w) \coloneqq -\sum_i w_i \log w_i$ and $\mathcal{A} = \max_i |\alpha_i - \bar{\alpha}_i |$ with $\bar{\alpha}_i = \sum_{j \neq i} \frac{w_j \alpha_j}{1- w_i}$
\end{thm}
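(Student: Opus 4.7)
Both bounds arise from analysis of the individual summands $w_i D(p_i \| m)$, where $p_i \coloneqq (1-\alpha_i) p + \alpha_i q$ and $m \coloneqq \sum_j w_j p_j = (1-\bar{\alpha}) p + \bar{\alpha} q$. The key elementary identity is $p_i - m = (\alpha_i - \bar{\alpha})(q - p)$, which yields $|p_i - m|_{TV} = |\alpha_i - \bar{\alpha}| \cdot |p - q|_{TV}$. Everything reduces to feeding this identity into an appropriate upper or lower bound on $D(p_i \| m)$.

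For the lower bound, I would apply Pinsker's inequality term by term: $D(p_i \| m) \geq 2 \log e \cdot |p_i - m|_{TV}^2 = 2 \log e \cdot (\alpha_i - \bar{\alpha})^2 |p-q|_{TV}^2$. Weighting by $w_i$ and summing, the factor $\sum_i w_i (\alpha_i - \bar{\alpha})^2$ collapses to $\Var_w(\alpha)$, producing a bound with constant $2\log e$, which is stronger than the claimed constant $\log e$.

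For the upper bound, the trick is to isolate the $i$-th component in the mixture $m$: write $m = w_i p_i + (1-w_i) \tilde m_i$ where $\tilde m_i \coloneqq \sum_{j \neq i} \frac{w_j}{1-w_i} p_j = (1-\bar\alpha_i) p + \bar\alpha_i q$. This exhibits $D(p_i \| m) = S_{0, 1-w_i}(p_i \| \tilde m_i)$ as a skew divergence in the paper's notation, and Lemma~\ref{lem: our lemma on skew divergence bounded by total variation} (or equivalently the $\alpha \leq \beta$ case of Proposition~\ref{prop: generalized Jensen-Shannon bound} with $\alpha = 0$, $\beta = 1 - w_i$) bounds it by $-\log w_i \cdot |p_i - \tilde m_i|_{TV}$. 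Applying the same identity calculation as above produces $|p_i - \tilde m_i|_{TV} = |\alpha_i - \bar\alpha_i| \cdot |p - q|_{TV}$, and substituting $|\alpha_i - \bar\alpha_i| \leq \mathcal A$ and summing, the $w_i$-weighted coefficient collapses to $\sum_i w_i (-\log w_i) = H(w)$.

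The main obstacle is selecting the right decomposition of $m$ for the upper bound: the split $m = w_i p_i + (1 - w_i) \tilde m_i$ is essential because it is precisely the coefficient $-\log w_i$ produced by the skew divergence bound that re-assembles, after weighting and summation, into the Shannon entropy $H(w)$ appearing in the theorem. Once that decomposition is identified, the remainder of the argument is routine bookkeeping combining Pinsker's inequality with the skew divergence total-variation bound already developed earlier in the section.
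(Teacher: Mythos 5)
Your proof is correct, and the upper bound travels a genuinely different route from the paper's. For the lower bound you do exactly what the paper does (term-by-term Pinsker plus the identity $p_i - m = (\alpha_i - \bar{\alpha})(q-p)$); in fact your version yields the constant $2\log e$, which is stronger than the stated $\log e$ --- the paper's own display inserts an extra factor of $\tfrac 1 2$ before summing. For the upper bound the paper invokes Theorem \ref{thm: concavity deficit} (imported from \cite{melbourne2020differential}) as a black box, converting the entropy concavity deficit into $\sum_i w_i D(f_i\|f)$ and reading off $\mathcal{T} = \mathcal{A}|p-q|_{TV}$. You instead make the mechanism explicit inside this paper: the decomposition $m = w_i p_i + (1-w_i)\tilde m_i$ exhibits each summand as the skew divergence $D\bigl(p_i \,\big\|\, w_i p_i + (1-w_i)\tilde m_i\bigr)$, which Proposition \ref{prop: generalized Jensen-Shannon bound} (with $\alpha=0$, $\beta = 1-w_i$) bounds by $-\log w_i\,|p_i - \tilde m_i|_{TV}$, and the coefficients $-w_i\log w_i$ reassemble into $H(w)$. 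This buys a self-contained argument relying only on results proved in Section \ref{sec: Skew divergences}, and en route gives the slightly sharper intermediate bound $\bigl(\sum_i -w_i\log w_i\,|\alpha_i - \bar\alpha_i|\bigr)|p-q|_{TV}$ before relaxing to $\mathcal{A}H(w)|p-q|_{TV}$; the cost is that you are essentially re-proving the special case of the imported theorem. One small caution: you were right to anchor the key step to Proposition \ref{prop: generalized Jensen-Shannon bound} rather than to Lemma \ref{lem: our lemma on skew divergence bounded by total variation} alone --- read literally against the definition $S_{0,t}(P\|Q) = D(P\|(1-t)P+tQ)$, the Lemma would hand you $-\log(1-w_i)$ rather than $-\log w_i$, so the ``equivalently'' in your parenthetical glosses over a notational inconsistency in the paper; the Proposition route is the unambiguous one.
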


Note that since $\bar{\alpha}_i$ is the $w$ average of the $\alpha_j$ terms with $\alpha_i$ removed, $\bar{\alpha}_i \in [0,1]$ and thus $\mathcal{A} \leq 1$.  We will 
need the following Theorem from \cite{melbourne2020differential} for the upper bound.
\begin{thm}[\cite{melbourne2020differential} Theorem 1.1] \label{thm: concavity deficit}
    For $f_i$ densities with respect to a common reference measure $\gamma$, and $\lambda_i > 0$ such that $\sum_{i=1}^n \lambda_i = 1$,
    \begin{align}
        h_\gamma(\sum_i \lambda_i f_i) - \sum_i \lambda_i h_\gamma(f_i) \leq \mathcal{T} H(\lambda),
    \end{align}
    where $h_\gamma(f_i) \coloneqq - \int f_i(x) \log f_i(x) d\gamma(x)$, and $\mathcal{T} = \sup_i | f_i - \tilde{f}_i |_{TV}$ with $\tilde{f}_i = \sum_{j \neq i} \frac{\lambda_j }{1 -\lambda_i } f_j$.
\end{thm}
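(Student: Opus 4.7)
The plan is to recognize $JS^{\alpha,w}(p:q)$ as an entropy concavity deficit and then apply two black boxes: Pinsker's inequality for the lower bound and Theorem \ref{thm: concavity deficit} for the upper bound. Setting $m_i \coloneqq (1-\alpha_i)p + \alpha_i q$ and $\bar m \coloneqq \sum_i w_i m_i = (1-\bar\alpha)p + \bar\alpha q$, a direct manipulation rewrites
\[
JS^{\alpha,w}(p:q) = \sum_i w_i D(m_i \| \bar m) = h_\mu(\bar m) - \sum_i w_i h_\mu(m_i).
\]
The structural observation driving both directions is that $\{m_i\}$ is an affine family in $(p,q)$: any signed combination $m_i - \sum_j c_j m_j$ equals $\bigl(\alpha_i - \sum_j c_j \alpha_j\bigr)(q-p)$, so its total variation norm factors cleanly as a scalar times $|p-q|_{TV}$.

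For the upper bound I would apply Theorem \ref{thm: concavity deficit} directly to $\{m_i\}$ with weights $\{w_i\}$, yielding
\[
JS^{\alpha,w}(p:q) \leq \mathcal{T} H(w), \qquad \mathcal{T} = \max_i |m_i - \tilde m_i|_{TV}, \qquad \tilde m_i = \sum_{j\neq i}\tfrac{w_j}{1-w_i}m_j.
\]
A short preparatory computation identifies $\tilde m_i = (1-\bar\alpha_i)p + \bar\alpha_i q$ (using $\sum_{j\neq i}\tfrac{w_j}{1-w_i}=1$ together with the definition of $\bar\alpha_i$), hence $m_i - \tilde m_i = (\alpha_i - \bar\alpha_i)(q-p)$ and $|m_i - \tilde m_i|_{TV} = |\alpha_i - \bar\alpha_i|\,|p-q|_{TV}$. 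Taking a maximum over $i$ gives $\mathcal{T} = \mathcal{A}|p-q|_{TV}$, which is the stated upper bound.

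For the lower bound I would apply Pinsker's inequality termwise, $D(m_i\|\bar m) \geq 2\log e \cdot |m_i - \bar m|_{TV}^2$, and compute $|m_i - \bar m|_{TV} = |\alpha_i - \bar\alpha|\,|p-q|_{TV}$ by the same affine reduction. Summing with weights $w_i$ yields
\[
JS^{\alpha,w}(p:q) \geq 2\log e \cdot \sum_i w_i (\alpha_i - \bar\alpha)^2 |p-q|_{TV}^2 = 2\log e \cdot \Var_w(\alpha)\,|p-q|_{TV}^2,
\]
which is at least as strong as the claim (the constant even improves by a factor of two over the stated $\log e$).

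I do not expect any genuine obstacle in either direction: once the affine identity $m_i - \sum_j c_j m_j = (\alpha_i - \sum_j c_j \alpha_j)(q-p)$ is isolated, the rest is bookkeeping against Pinsker and Theorem \ref{thm: concavity deficit}. The only mildly fussy step is the identification $\tilde m_i = (1-\bar\alpha_i)p + \bar\alpha_i q$, which I would state as a one-line preparatory lemma before invoking Theorem \ref{thm: concavity deficit} so that the maximum defining $\mathcal{T}$ collapses to $\mathcal{A}|p-q|_{TV}$.
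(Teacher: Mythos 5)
There is a genuine gap: your proposal does not prove the assigned statement at all. The statement to be proved is Theorem \ref{thm: concavity deficit} itself, i.e.\ the entropy concavity-deficit bound $h_\gamma(\sum_i \lambda_i f_i) - \sum_i \lambda_i h_\gamma(f_i) \leq \mathcal{T} H(\lambda)$. What you have written is a derivation of the generalized Jensen--Shannon total variation bounds (Theorem \ref{thm: generalized Jensen Shannon total variation bound}), and in the course of it you invoke Theorem \ref{thm: concavity deficit} explicitly as a black box. So the inequality you were asked to establish is never argued for; it is assumed. (Your argument does faithfully reproduce the paper's proof of Theorem \ref{thm: generalized Jensen Shannon total variation bound}, including the identification $\tilde m_i = (1-\bar\alpha_i)p + \bar\alpha_i q$, but that is a downstream application, not the target.)

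To actually prove Theorem \ref{thm: concavity deficit}, the route suggested by the paper's own toolkit is: write the deficit as a mixture of relative entropies, $h_\gamma(f) - \sum_i \lambda_i h_\gamma(f_i) = \sum_i \lambda_i D(f_i \| f)$ with $f = \sum_j \lambda_j f_j$; observe the convex decomposition $f = \lambda_i f_i + (1-\lambda_i)\tilde f_i$ with $\tilde f_i = \sum_{j \neq i} \frac{\lambda_j}{1-\lambda_i} f_j$; apply the Audenaert-type skewing bound of Lemma \ref{lem: our lemma on skew divergence bounded by total variation}, which in this situation reads $D(f_i \| \lambda_i f_i + (1-\lambda_i)\tilde f_i) \leq \log(1/\lambda_i)\, |f_i - \tilde f_i|_{TV} \leq \mathcal{T} \log(1/\lambda_i)$; and then sum against the weights $\lambda_i$ to get $\sum_i \lambda_i D(f_i\|f) \leq \mathcal{T} \sum_i \lambda_i \log(1/\lambda_i) = \mathcal{T} H(\lambda)$. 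None of these steps appear in your proposal, and without them (or some substitute, e.g.\ a direct proof of the skew bound) the claim is unsupported. As a side remark, your termwise-Pinsker computation in the unproved-but-unneeded lower bound does give the constant $2\log e$, which is indeed a factor of two better than the constant stated in Theorem \ref{thm: generalized Jensen Shannon total variation bound}; but again, that concerns a different theorem than the one under review.
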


\begin{proof}[Proof of Theorem \ref{thm: generalized Jensen Shannon total variation bound}]
We apply Theorem \ref{thm: concavity deficit} with $f_i = (1-\alpha_i) p + \alpha_i q$, $\lambda_i = w_i$, and noticing that in general
\begin{align}
    h_\gamma(\sum_i \lambda_i f_i) - \sum_i \lambda h_\gamma(f_i) = \sum_i \lambda_i D( f_i || f),
\end{align}
we have
\begin{align}
    JS^{\alpha,w}(p : q) 
       &= 
           \sum_{i=1}^n w_i D( (1-\alpha_i) p + \alpha_i q || (1- \bar{\alpha} ) p + \bar{\alpha} q)
               \\
       &\leq
           \mathcal{T} H(w).
\end{align}
It remains to determine $\mathcal{T} = \max_i | f_i - \tilde{f}_i |_{TV}$,
\begin{align}
   \tilde{f}_i - f_i
       &=
           \frac{f - f_i}{1-\lambda_i}
               \\
       &=
           \frac{((1- \bar{\alpha})p + \bar{\alpha} q) - ((1-\alpha_i)p + \alpha_i q)}{1 -w_i}
               \\
       &=
           \frac{(\alpha_i - \bar{\alpha})(p-q)}{1 -w_i}
               \\
       &=
       (\alpha_i - \bar{\alpha}_i)(p-q).
\end{align}
Thus
   $\mathcal{T}
       =
           \max_i (\alpha_i - \bar{\alpha}_i) |p-q|_{TV}
       =
           \mathcal{A} |p - q|_{TV},$ and the proof of the upper bound is complete.\\

To prove the lower bound,
we apply Pinsker's inequality, $2 \log e| P-Q|_{TV}^2 \leq  D(P||Q),$
\begin{align}
   JS^{\alpha, w}(p : q)
       &=
           \sum_{i=1}^n w_i D((1-\alpha_i) p + \alpha_i q|| (1-\bar \alpha)p + \bar \alpha q)
               \\
       &\geq
           \frac 1 2 \sum_{i=1}^n w_i 2 \log e |((1-\alpha_i) p + \alpha_i q) - ((1-\bar \alpha)p + \bar \alpha q) |_{TV}^2
               \\
       &=
           \log e \sum_{i=1}^n w_i (\alpha_i - \bar \alpha)^2 | p -q|_{TV}^2
               \\
       &=
           \log e \Var_w(\alpha) | p -q|^2_{TV}.
\end{align}
\end{proof}

\begin{defn}
    Given an $f$-divergence, densities $p$ and $q$ with respect to common reference measure, $\alpha \in [0,1]^n$ and $w \in (0,1)^n$ such that $\sum_i w_i = 1$
    define its generalized skew divergence
        \begin{align}   
            D_f^{\alpha,w}(p:q) = \sum_{i=1}^n w_i D_f((1-\alpha_i) p + \alpha_i q|| (1-\bar \alpha) p + \bar \alpha q).
        \end{align}
        where $\bar \alpha = \sum_i w_i \alpha_i$.
\end{defn}

Note that by Theorem \ref{thm: skewing preserves f-divergence}, $D_f^{\alpha,w}(\cdot|| \cdot )$ is an $f$-divergence.
The generalized skew divergence of the relative entropy is the generalized Jensen-Shannon divergence
 $JS^{\alpha,w}$.  We will denote the generalized skew divergence of the $\chi^2$-divergence from $p$ to $q$
 by 
\begin{align} 
    \chi^2_{\alpha,w}(p:q) \coloneqq \sum_i w_i \chi^2((1-\alpha_i)p + \alpha_i q|| (1-\bar \alpha p + \bar \alpha q)
\end{align}

Note that when $n=2$ and $\alpha_1 = 0$, $\alpha_2 = 1$ and $w_i = \frac 1 2$, we recover the skew symmetrized divergence in 
Definition \ref{def: Symmetrization divergence}
\begin{align}
    D^{(0,1),(1/2,1/2)}_f(p:q) = \Delta_f(p||q)
\end{align}

The following theorem shows that the usual upper bound for the relative entropy by the $\chi^2$-divergence can be reversed up to a factor in the skewed case.
\begin{thm} \label{thm: generalized chi square upper bounded by generalized jensens shannon}
For $p$ and $q$ with a common dominating measure $\mu$,
\begin{align*}
    \chi^2_{\alpha,w}(p: q) \leq N_\infty(\alpha,w) JS^{\alpha,w}(p:q).
\end{align*}
\end{thm}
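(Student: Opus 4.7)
The plan is to apply the Sason-Verd\'u functional dominance theorem (Theorem \ref{thm: Sason-Verdu functional dominance in f divergences}) termwise, exploiting the fact that skewing produces a uniformly bounded Radon-Nikodym derivative, so that the $\chi^2$-integrand $(t-1)^2$ can be dominated by the relative-entropy integrand up to a multiplicative constant on the relevant bounded interval.

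First, I would observe that for each index $i$ the Radon-Nikodym derivative
\[
    r_i \coloneqq \frac{d((1-\alpha_i)p + \alpha_i q)}{d((1-\bar\alpha)p + \bar\alpha q)}
\]
is pointwise confined to the closed interval with endpoints $\tfrac{1-\alpha_i}{1-\bar\alpha}$ and $\tfrac{\alpha_i}{\bar\alpha}$, by the same elementary observation about skewed measures highlighted in the introduction. Hence the skewed density ratios lie uniformly in a bounded sub-interval $[a_i,b_i] \subset (0,\infty)$ determined by $\alpha_i$ and $\bar\alpha$; this is where the subscript $\infty$ in $N_\infty(\alpha,w)$ should enter, tracking R\'enyi $\infty$-type quantities such as $D_\infty(\alpha_i || \bar\alpha)$.

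Next, I would replace both integrands by their representatives with vanishing derivative at $t=1$, which does not alter the corresponding divergence: $g(t) = (t-1)^2$ for $\chi^2$, and a shift $\tilde f(t) = t\log t - c(t-1)$ of $t\log t$ with $c$ chosen so that $\tilde f'(1) = 0$. Two applications of L'H\^opital give $\lim_{t \to 1} g(t)/\tilde f(t) < \infty$, so the continuous ratio $g/\tilde f$ attains a finite supremum $C_i$ on the compact interval $[a_i,b_i]$. Theorem \ref{thm: Sason-Verdu functional dominance in f divergences} then yields
\[
    \chi^2\bigl((1-\alpha_i)p + \alpha_i q \,||\, (1-\bar\alpha)p + \bar\alpha q\bigr) \le C_i \, D\bigl((1-\alpha_i)p + \alpha_i q \,||\, (1-\bar\alpha)p + \bar\alpha q\bigr).
\]
Multiplying the $i$-th inequality by $w_i$, summing in $i$, and identifying $N_\infty(\alpha,w)$ with $\max_i C_i$ delivers the claimed inequality.

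The main obstacle is to verify that the supremum defining $C_i$ matches the closed-form quantity the authors intend by $N_\infty(\alpha,w)$. Monotonicity of $t \mapsto (t-1)^2/(t\log t - c(t-1))$ on either side of $t=1$, which is a routine calculus check, should force $C_i$ to be attained at one of the endpoints of $[a_i,b_i]$, giving an explicit expression in $\alpha_i$ and $\bar\alpha$. Beyond this bookkeeping the argument is a direct pointwise-to-integral passage through Sason-Verd\'u, exactly dual in spirit to the proof of Theorem \ref{thm: chi square divergence is smallest strongly convex divergence}, where a pointwise convexity inequality was lifted in the same way.
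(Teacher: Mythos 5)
Your reduction is in essence the same one the paper makes: everything hinges on the pointwise bound $\frac{dP_i}{dQ}\le\max\bigl\{\frac{1-\alpha_i}{1-\bar\alpha},\frac{\alpha_i}{\bar\alpha}\bigr\}=e^{D_\infty(\alpha_i\|\bar\alpha)}$ followed by a comparison of $(t-1)^2$ against a shifted copy of $t\log t$ on that bounded range. The paper routes the comparison through the $\tfrac{1}{N_\infty}$-convexity of $t\log t$ on $(0,N_\infty]$ and Proposition \ref{prop: kappa convexity of kappa convex divergences continuous}; you route it through Theorem \ref{thm: Sason-Verdu functional dominance in f divergences} applied termwise. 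Those two passages from a pointwise inequality to an integral inequality are interchangeable, so the skeleton of your argument is fine.

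The gap is the step you flag as ``bookkeeping'': the identification $\max_i C_i=N_\infty(\alpha,w)$ is not a routine calculus check --- it is false. With $\tilde f(t)=t\log t-(\log e)(t-1)$, so that $\tilde f(1)=\tilde f'(1)=0$, two applications of L'H\^opital give $\lim_{t\to1}(t-1)^2/\tilde f(t)=2/\tilde f''(1)=2/\log e$; hence $C_i\ge 2/\log e$ for every $i$, while $N_\infty(\alpha,w)$ can be arbitrarily close to $1$ (take all $\alpha_i$ near $\bar\alpha$). Nor is the right endpoint value equal to the endpoint: at $t=b=e$ one has $(e-1)^2/(e\ln e-e+1)=(e-1)^2\approx2.95>e$. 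What the pointwise comparison \emph{does} yield, since $\tilde f$ is $\tfrac{\log e}{N_\infty}$-convex on $(0,N_\infty]$ and vanishes to first order at $t=1$, is $\tilde f(t)\ge\frac{\log e}{2N_\infty}(t-1)^2$ there, hence $C_i\le \tfrac{2}{\log e}\,e^{D_\infty(\alpha_i\|\bar\alpha)}$ and
\[
\chi^2_{\alpha,w}(p:q)\;\le\;\frac{2}{\log e}\,N_\infty(\alpha,w)\,JS^{\alpha,w}(p:q).
\]
This is exactly where the paper's own proof lands (its final display is $\chi^2_{\alpha,w}/(2N_\infty)\le JS^{\alpha,w}$), so your method, carried out correctly, reproduces the paper's argument with the same extra factor of $2$. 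The second-order expansion above (both sides are asymptotically proportional with ratio $2/\log e$ as all $\alpha_i\to\bar\alpha$ while $N_\infty\to1$) shows that factor cannot be removed, so the mismatch lies in the theorem statement, not in a fixable weakness of either proof; but as written, your claim that the Sason--Verd\'u constant equals $N_\infty(\alpha,w)$ is the step that fails.
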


Writing $N_\infty(\alpha,w) = \max_i \max \left\{ \frac{1- \alpha_i}{1- \bar \alpha}, \frac{\alpha_i}{\bar \alpha} \right\}$. 
For $\alpha \in [0,1]^n$ and $w \in (0,1)^n$ such that $\sum_i w_i = 1$, we will use the notation $N_\infty(\alpha, w) \coloneqq \max_i e^{D_\infty(\alpha_i || \bar \alpha)}$
where $\bar \alpha \coloneqq \sum_i w_i \alpha_i$.

\begin{proof}
    By definition,
    \begin{align*}
        JS^{\alpha,w}(p : q) = \sum_{i=1}^n w_i D( (1-\alpha_i) p + \alpha_i q || (1-\bar \alpha) p + \bar{\alpha} q).
    \end{align*}
    Taking $P_i$ to be the measure associated to $(1-\alpha_i)p + \alpha_i q$ and $Q$ given by $(1-\bar \alpha) p + \bar{\alpha} q$,
    then
    \begin{align}
        \frac{dP_i}{dQ} 
            = 
                \frac{(1-\alpha_i)p + \alpha_i q}{(1-\bar \alpha) p + \bar \alpha q} 
            \leq 
                \max \left\{ \frac{1-\alpha_i}{1-\bar \alpha} , \frac{\alpha_i}{\bar \alpha} \right\}
            = e^{D_\infty(\alpha_i||\bar \alpha)} \leq N_\infty(\alpha,w).
    \end{align}
    Since $f(x) = x \log x$, the convex function associated to the usual KL divergence, satisfies $f''(x) = \frac 1 x$, $f$ is $e^{-D_\infty(\alpha)}$-convex 
    on $[0, \sup_{x,i} \frac{dP_i}{dQ}(x)]$, applying Proposition \ref{prop: kappa convexity of kappa convex divergences continuous}, we obtain
    \begin{align} \label{eq: right above us}
        D\left( \sum_i w_i P_i \bigg|\bigg| Q\right) \leq \sum_i w_i D(P_i || Q) - \frac{\sum_i w_i \int_\mathcal{X} \left( \frac{d P_i}{dQ} - \frac{dP}{dQ} \right)^2 dQ}{2 N_\infty(\alpha,w)}.
    \end{align}
    Since $Q = \sum_i w_i P_i$, the left hand side of \eqref{eq: right above us} is zero, while
    \begin{align}
        \sum_i w_i \int_{\mathcal{X}} \left( \frac{d P_i}{dQ} - \frac{dP}{dQ} \right)^2 dQ 
            &= 
                \sum_i w_i \int_{\mathcal{X}} \left( \frac{d P_i}{dP} - 1 \right)^2 dP
                    \\
            &= 
                \sum_i w_i \chi^2 (P_i || P)
                    \\
            &=
                \chi^2_{\alpha,w} (p : q).
    \end{align}  
    Rearranging gives,
    \begin{align}
        \frac{\chi^2_{\alpha,w}(p:q)}{2 N_\infty(\alpha,w)} \leq JS^{\alpha,w}(p:q),
    \end{align}
    which is our conclusion.
\end{proof}

\section{Total Variation Bounds and Bayes risk} \label{sec: TV bounds and Bayes risk}

In this section we will derive bounds on the Bayes risk associated to a family of probability 
measures with a prior distribution $\lambda$. Let us state definitions and recall basic relationships.
Given probability densities $\{p_i\}_{i=1}^n$ on a space $\mathcal{X}$ with respect a reference measure $\mu$ and $\lambda_i \geq 0$ such that $\sum_{i=1}^n \lambda_i = 1$, define the Bayes risk,
\begin{align} 
    R \coloneqq R_\lambda(p)  1 - \int_\mathcal{X} \max_i \{ \lambda_i p_i (x)\} d\mu(x)
\end{align}
If $\ell(x,y) = 1 -\delta_x(y)$, and we define $T(x) \coloneqq \arg \max_i \lambda_i p_i(x)$ then observe that this definition is consistent with,
the usual definition of the Bayes risk associated to the loss function $\ell$.  Below, we consider $\theta$ to be a random variable on $\{1,2,\dots, n\}$ such that $\mathbb{P}(\theta = i) = \lambda_i$,
and $x$ to be a variable with conditional distribution $\mathbb{P}(X \in A| \theta = i) = \int_A p_i(x) d\mu(x)$.     The following result shows that the Bayes risk gives the probability of the categorization error, under an optimal estimator.
\begin{prop}
    The Bayes risk satisfies
    \[
        R = \min_{\hat{\theta}} \mathbb{E} \ell(\theta, \hat{\theta}(X)) = \mathbb{E} \ell(\theta, T(X))
    \]
    where the minimum is defined over $\hat{\theta}: \mathcal{X} \to \{1,2,\dots, n\}$.
    \end{prop}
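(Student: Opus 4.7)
My plan is to compute $\mathbb{E}\ell(\theta, \hat{\theta}(X))$ directly for an arbitrary (measurable) estimator $\hat{\theta}:\mathcal{X} \to \{1,\dots,n\}$, bound it below pointwise in $x$ by an integrand whose integral is $R$, and then check that the estimator $T$ saturates the bound.

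First, using $\ell(i,j) = 1 - \delta_i(j)$, I would write
\begin{align*}
\mathbb{E}\ell(\theta,\hat{\theta}(X)) = \mathbb{P}(\theta \neq \hat{\theta}(X)) = 1 - \mathbb{P}(\theta = \hat{\theta}(X)).
\end{align*}
Decomposing the success probability over the value of $\hat{\theta}(X)$ and invoking the joint law $\mathbb{P}(\theta = i, X \in A) = \lambda_i \int_A p_i\, d\mu$, I obtain
\begin{align*}
\mathbb{P}(\theta = \hat{\theta}(X)) = \sum_{i=1}^n \int_{\{\hat{\theta} = i\}} \lambda_i p_i(x)\, d\mu(x) = \int_{\mathcal{X}} \lambda_{\hat{\theta}(x)} p_{\hat{\theta}(x)}(x)\, d\mu(x).
\end{align*}

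Next, the pointwise inequality $\lambda_{\hat{\theta}(x)} p_{\hat{\theta}(x)}(x) \leq \max_i \lambda_i p_i(x)$ holds for every $x \in \mathcal{X}$, and integrating against $\mu$ yields
\begin{align*}
\mathbb{E}\ell(\theta,\hat{\theta}(X)) \geq 1 - \int_{\mathcal{X}} \max_i \lambda_i p_i(x)\, d\mu(x) = R.
\end{align*}
Finally, plugging in $\hat{\theta} = T$ with $T(x) = \arg\max_i \lambda_i p_i(x)$ turns the pointwise inequality into an equality by the very definition of $T$, so $\mathbb{E}\ell(\theta, T(X)) = R$, which simultaneously identifies the minimum with $R$ and certifies $T$ as a minimizer.

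The only genuine obstacle is a measurability technicality: $T$ must be measurable for the expectation $\mathbb{E}\ell(\theta,T(X))$ to make sense. Since there can be ties in the $\arg\max$, I would define $T(x)$ with a deterministic tie-breaking rule (take the smallest index achieving the maximum); the sets $\{x : \lambda_i p_i(x) > \lambda_j p_j(x)\}$ and $\{x : \lambda_i p_i(x) = \lambda_j p_j(x)\}$ are Borel since the densities $p_i$ are measurable, so $T$ is Borel measurable. With this in place the above chain of equalities and inequalities gives the result.
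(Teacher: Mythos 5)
Your proposal is correct and follows essentially the same route as the paper: compute $\mathbb{E}\ell(\theta,\hat{\theta}(X)) = 1 - \int_{\mathcal{X}} \lambda_{\hat{\theta}(x)} p_{\hat{\theta}(x)}(x)\, d\mu(x)$ and bound the integrand pointwise by $\max_i \lambda_i p_i(x)$, with equality for $\hat{\theta}=T$. The measurability remark about tie-breaking in the $\arg\max$ is a sensible addition that the paper leaves implicit.
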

    \begin{proof}
    Observe that $R  = 1 - \int_\mathcal{X} \lambda_{T(x)} p_{T(x)}(x) d \mu(x) = \mathbb{E} \ell(\theta, T(X))$.
    Similarly,
        \begin{align*}
            \mathbb{E} \ell(\theta, \hat{\theta}(X))
                &=
                    1 - \int_\mathcal{X} \lambda_{\hat{\theta}(x)} p_{\hat{\theta}(x)}(x) d\mu(x)
                        \\
                &\geq
                    1 - \int_\mathcal{X} \lambda_{T(x)} p_{T(x)}(x) d\mu(x) = R,
        \end{align*}
        which gives our conclusion.
    \end{proof}
    
   The Bayes risk can also be tied directly
    to the total variation, in the following special case.

    \begin{prop}
    When $n =2$ and $\lambda_1 = \lambda_2 = \frac 1 2$, the Bayes risk associated
    to the densities $p_1$ and $p_2$ satisfies
    \begin{align}
        2R = 1 - |p_1 - p_2|_{TV}
    \end{align}
    \end{prop}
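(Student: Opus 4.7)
The plan is to unpack both sides of the claimed identity starting from the defining formula $R = 1 - \int_{\mathcal{X}} \max_i\{\lambda_i p_i(x)\}\,d\mu(x)$ and specialize to $n=2$, $\lambda_1=\lambda_2=\tfrac{1}{2}$. In that case the prior factors out of the max and
\begin{align*}
    2R = 2 - \int_\mathcal{X} \max\{p_1(x),p_2(x)\}\,d\mu(x).
\end{align*}
So the whole identity will reduce to computing the integral of $\max\{p_1,p_2\}$ in terms of the total variation.

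Next, I would invoke the pointwise identity $\max\{a,b\} = \tfrac{a+b}{2} + \tfrac{|a-b|}{2}$, valid for real numbers $a,b$. Applying this with $a=p_1(x)$, $b=p_2(x)$ and integrating against $\mu$ yields
\begin{align*}
    \int_\mathcal{X} \max\{p_1,p_2\}\,d\mu = \frac{1}{2}\int_\mathcal{X} p_1\,d\mu + \frac{1}{2}\int_\mathcal{X} p_2\,d\mu + \frac{1}{2}\int_\mathcal{X} |p_1-p_2|\,d\mu = 1 + \frac{1}{2}\int_\mathcal{X} |p_1-p_2|\,d\mu,
\end{align*}
since $p_1,p_2$ are probability densities.

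The last step is to identify $\tfrac{1}{2}\int |p_1-p_2|\,d\mu$ with $|p_1-p_2|_{TV}$ under the paper's normalization $|P-Q|_{TV} = \sup_A |P(A)-Q(A)|$; this equivalence is the standard $L^1$ representation of total variation, available via the Hahn decomposition applied to the signed measure with density $p_1-p_2$ with respect to $\mu$. Substituting this back into the display for $2R$ then gives $2R = 2 - (1+|p_1-p_2|_{TV}) = 1 - |p_1-p_2|_{TV}$, as desired.

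There is no real obstacle here; the only point that might merit a sentence of care is the normalization convention, since the paper explicitly uses the $\sup_A |P(A)-Q(A)| \leq 1$ convention rather than the $L^1$-norm of densities, and one must halve the $L^1$ integral accordingly. Everything else is an elementary manipulation of the $\max$ function and the fact that both $p_1$ and $p_2$ integrate to $1$.
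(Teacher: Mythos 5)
Your argument is correct and is essentially identical to the paper's: the paper likewise writes $p_T = \frac{|p_1-p_2|+p_1+p_2}{2}$ (the same $\max$ identity), integrates, and uses the normalization $|p_1-p_2|_{TV} = \frac{1}{2}\int|p_1-p_2|\,d\mu$. No gaps; your extra sentence on the total variation convention is the right point to be careful about.
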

    \begin{proof}
    Since $p_T = \frac{|p_1 - p_2| + p_1 + p_2}{2}$, integrating gives $\int_\mathcal{X} p_T(x) d\mu(x) = |p_1-p_2|_{TV} + 1$ from which the equality follows.
    \end{proof}

Information theoretic bounds to control the Bayes and minimax risk have an extensive literature, 
see for example \cite{birge2005new, chen2016bayes, guntuboyina2011lower, xu2016information, YB99}.  Fano's inequality is the
  seminal result in this direction, and we direct the reader to a
 survey of such techniques in statistical estimation, see \cite{scarlett2019introductory}.  What follows can be understood as a sharpening of \cite{guntuboyina2011lower} under the assumption of a $\kappa$-convexity.

    The function
    $T(x) = \arg \max_i \{ \lambda_i p_i(x)\}$, induces the following convex decompositions of our densities.   The density $q$ can be realized as a convex combination of $q_1 = \frac{\lambda_T q}{1-Q}$ where $ Q = 1 - \int \lambda_T q d\mu$
     and $q_2= \frac{1-\lambda_T) q}{ Q}$,
    \[
        q = (1-Q) q_1 + Q q_2.
    \]
    If we take $p \coloneqq \sum_i \lambda_i p_i$ then $p$ can be decomposed as $\rho_1 = \frac{\lambda_T p_T}{ 1-R}$ and $\rho_2 = \frac{p - \lambda_T p_T}{ R}$ so that
    \[
        p = (1-R ) \rho_1 + R \rho_2.
    \]
    
\begin{thm} \label{thm: the theorem2}
When $f$ is $\kappa$-convex, on $(a,b)$ with $a = \inf_{i,x} \frac{p_i(x)}{q(x)}$ and $b = \sup_{i,x} \frac{p_i(x)}{q(x)}$
\[
    \sum_i \lambda_i D_f( p_i || q) \geq D_f(R||Q) + \frac{\kappa W} 2
\]
where 
\[
   W \coloneqq W(\lambda_i, p_i, q) \coloneqq \frac{(1-R)^2}{1-Q} \chi^2(\rho_1 || q_1) + \frac{R^2}{Q} \chi^2(\rho_2 || q_2) + W_0
\]
for $W_0 \geq 0$.
\end{thm}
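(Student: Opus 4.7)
The plan is to lift the problem to the product space $\mathcal{X} \times [n]$ by introducing the joint measures $P_{\mathrm{joint}}(dx, i) := \lambda_i p_i(x) d\mu(x)$ and $Q_{\mathrm{joint}}(dx, i) := \lambda_i q(x) d\mu(x)$, whose Radon--Nikodym derivative $F(x, i) := p_i(x)/q(x)$ takes values in $(a, b)$ by hypothesis, and for which a direct computation gives
\begin{align*}
\sum_i \lambda_i D_f(p_i \| q) = D_f(P_{\mathrm{joint}} \| Q_{\mathrm{joint}}) = \int f(F) \, dQ_{\mathrm{joint}}.
\end{align*}
The Bayes predictor $T$ then induces a partition $\mathcal{X} \times [n] = A^+ \sqcup A^-$ with $A^+ := \{(x, i) : i = T(x)\}$, from which one checks immediately that $Q_{\mathrm{joint}}(A^+) = 1 - Q$, $P_{\mathrm{joint}}(A^+) = 1 - R$, $Q_{\mathrm{joint}}(A^-) = Q$, and $P_{\mathrm{joint}}(A^-) = R$.

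Next I would apply Lemma \ref{lem: kappa jensens} to $F$ viewed as a random variable on each of the probability spaces $(A^\pm, Q_\pm)$, where $Q_\pm$ denotes $Q_{\mathrm{joint}}$ renormalized on $A^\pm$. This gives
\begin{align*}
\int_{A^\pm} f(F) \, dQ_{\mathrm{joint}} \geq Q_{\mathrm{joint}}(A^\pm) \, f\bigl(\mathbb{E}_{Q_\pm} F\bigr) + \frac{\kappa}{2} Q_{\mathrm{joint}}(A^\pm) \, \mathrm{Var}_{Q_\pm}(F).
\end{align*}
Since $\mathbb{E}_{Q_+} F = (1-R)/(1-Q)$ and $\mathbb{E}_{Q_-} F = R/Q$, summing the two leading terms reconstructs exactly $D_f(R \| Q) = (1-Q) f\bigl((1-R)/(1-Q)\bigr) + Q f(R/Q)$.

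The remaining step is to identify the two variance contributions with the $\chi^2$-terms defining $W$. On $A^+$, where $F(x, T(x)) = p_{T(x)}/q$ and $\rho_1/q_1 = (1-Q) p_T/((1-R)q)$ is proportional to $F|_{A^+}$, unwinding the definitions of $\rho_1 = \lambda_T p_T/(1-R)$ and $q_1 = \lambda_T q/(1-Q)$ yields the clean identity
\begin{align*}
Q_{\mathrm{joint}}(A^+) \, \mathrm{Var}_{Q_+}(F) = \frac{(1-R)^2}{1-Q} \chi^2(\rho_1 \| q_1),
\end{align*}
matching the first term of $W$ exactly.

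I expect the main obstacle to be the $A^-$ piece. Here $\rho_2 = \bigl(\sum_{i \neq T(x)} \lambda_i p_i\bigr)/R$ and $q_2 = (1 - \lambda_T) q/Q$ involve aggregated numerators and denominators, while $Q_{\mathrm{joint}}(A^-) \mathbb{E}_{Q_-}(F^2)$ instead integrates $\sum_{i \neq T(x)} \lambda_i p_i^2/q$. I would close the gap using the Cauchy--Schwarz bound $\bigl(\sum_{i \neq T(x)} \lambda_i p_i(x)\bigr)^2 \leq (1-\lambda_{T(x)}) \sum_{i \neq T(x)} \lambda_i p_i(x)^2$, which after rearrangement gives
\begin{align*}
Q_{\mathrm{joint}}(A^-) \, \mathrm{Var}_{Q_-}(F) \geq \frac{R^2}{Q} \chi^2(\rho_2 \| q_2),
\end{align*}
and one then defines $W_0$ to be the nonnegative slack in this inequality. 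Combining the two pieces produces the stated decomposition of $W$ and completes the proof.
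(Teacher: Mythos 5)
Your proposal is correct and takes essentially the same route as the paper's proof: the paper applies the strong Jensen inequality in two nested stages (first over $i \neq T(x)$ for fixed $x$, which produces $W_0$, then over $x$ separately on the $T$- and non-$T$-pieces, producing the two $\chi^2$ terms), and this is exactly equivalent to your single application on the lifted space $\mathcal{X} \times \{1,\dots,n\}$ followed by the law of total variance on $A^-$. In particular, the Cauchy--Schwarz slack you propose to call $W_0$ coincides exactly with the paper's explicit $W_0 = \int \sum_{i \neq T} \lambda_i \bigl| p_i - \sum_{j \neq T} \tfrac{\lambda_j}{1-\lambda_T} p_j \bigr|^2 / q \, d\mu$, so nothing is lost in your reorganization.
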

$W_0$ can be expressed explicitly as
\[
    W_0 = \int (1- \lambda_{T})Var_{\lambda_i \neq T}\left( \frac{p_i}{q}\right) d \mu =
     \int \sum_{i\neq T} \lambda_i \frac{|p_i - \sum_{j \neq T} \frac{\lambda_j}{1- \lambda_T} p_j|^2 }{q} d\mu,
\]
where for fixed $x$, we consider the variance $Var_{\lambda_i \neq T}\left( \frac{p_i}{q}\right)$ to be the variance of a random variable taking values $p_i(x)/q(x)$ with probability $\lambda_i/(1-\lambda_{T(x)})$ for $i \neq T(x)$.
Note this term is a non-zero term only when $n > 2$.
\begin{proof}
 For a fixed $x$, we apply Lemma \ref{lem: kappa jensens}
\begin{align}
    \sum_i \lambda_i f\left( \frac{p_i}{q} \right)
        &=
            \lambda_T f\left( \frac{p_T}{q} \right) + (1- \lambda_T) \sum_{i \neq T} \frac{\lambda_i}{1 - \lambda_T} f\left( \frac{p_i}{q} \right)
                \\
        &\geq
            \lambda_T  f\left( \frac{p_T}{q} \right) + (1- \lambda_T) \left[ f\left(\frac{p - \lambda_T p_T }{q(1 - \lambda_T)}  \right) + \frac \kappa 2 \Var_{\lambda_{i\neq T}}\left( \frac{p_i}{q}\right) \right]
\end{align}
Integrating,
\begin{align}
    \sum_i \lambda_i D_f (p_i || q) 
        \geq
            \int   \lambda_T  f\left( \frac{p_T}{q} \right) q &+ \int (1- \lambda_T)  f\left(\frac{- \lambda_T p_T + \sum_i \lambda_i p_i }{q(1 - \lambda_T)}  \right)q
        + \frac \kappa 2 W_0,
\end{align}
where
\begin{align}
    W_0 = \int \sum_{i\neq T(x)} \frac{ \lambda_i}{1 - \lambda_T(x)}\frac{|p_i - \sum_{j \neq T} \frac{\lambda_j}{1- \lambda_T} p_j|^2 }{q} d\mu.
\end{align}
Applying the $\kappa$-convexity of $f$,
\begin{align}
    \int   \lambda_T  f\left( \frac{p_T}{q} \right) q
        &= 
            (1-Q) \int q_1 f\left( \frac{p_T}{q} \right) 
                \\
        &\geq (1-Q) \left( f\left( \frac{\int \lambda_T p_T}{1-Q} \right)  + \frac \kappa 2 \Var_{q_1}\left( \frac{p_T}{q} \right) \right)
                \\
        &=
            (1-Q) f((1-R)/(1-Q)) + \frac{Q \kappa}2 W_1,
\end{align}
with 
\begin{align}
    W_1 &\coloneqq  \Var_{q_1}\left( \frac{p_T}{q} \right)
            \\
          &=
          \left(\frac{1-R}{1-Q} \right)^2 \Var_{q_1}\left( \frac{ \lambda_T p_T}{\lambda_T q} \frac{1-Q}{1-R} \right)
            \\
        &=
            \left( \frac{1-R}{1-Q} \right)^2 \Var_{q_1}\left(\frac{\rho_1}{q_1} \right)
                \\
        &=
            \left(\frac{1-R}{1-Q}\right)^2 \chi^2(\rho_1|| q_1)
\end{align}
Similarly,
\begin{align}
    \int (1- \lambda_T)  f\left(\frac{p - \lambda_T p_T  }{q(1 - \lambda_T)}  \right)q
        &=
            Q\int q_2 f\left(\frac{p- \lambda_T p_T  }{q(1 - \lambda_T)}\right)
                \\
        &\geq 
            Q f\left(\int q_2 \frac{p- \lambda_T p_T  }{q(1 - \lambda_T)}\right) + \frac{Q \kappa} 2 W_2
                \\
        &=
            Q f \left( \frac{R}{1-Q}\right) + \frac{Q \kappa} 2 W_2
\end{align}
where
\begin{align}
    W_2
        &\coloneqq \Var_{q_2}\left( \frac{ p - \lambda_T p_T}{q(1-\lambda_T)} \right)
            \\
        &=
            \left(\frac{ R}{ Q} \right)^2 \Var_{q_2} \left( \frac{ p - \lambda_T p_T}{q(1-\lambda_T)} \frac{ Q}{R} \right) 
                \\
       &=
        \left( \frac{R}{Q} \right)^2 \Var_{q_2} \left( \frac{ p - \lambda_T p_T}{q(1-\lambda_T)}  - \frac{ R}{Q} \right)^2
            \\
        &=
            \left( \frac{R}{Q}\right)^2 \int q_2 \left( \frac{\rho_2}{q_2} - 1 \right)^2
                \\
        &=
            \left( \frac{R}{Q}\right)^2 \chi^2(\rho_2|| q_2)
\end{align}
Writing $W= W_0 + W_1 + W_2$ we have our result.
\end{proof}

\begin{cor} \label{cor: Guntuboyina for balanced lambda}
   When $\lambda_i = \frac 1 n$, and $f$ is $\kappa$-convex on $(\inf_{i,x} p_i/q, \sup_{i,x} p_i/q)$
    \begin{align}
    \frac 1 n \sum_i &D_f( p_i ||q) 
    	\\
    		&\geq  D_f(R || (n-1)/n) + \frac{\kappa}{2} \left( n^2(1-R)^2 \chi^2(\rho_1||q) + \left( \frac{nR}{n-1}\right)^2 \chi^2(\rho_2||q) + W_0 \right) 
    \end{align}
    further when $n =2$, 
    \begin{align}
        \frac{D_f(p_1 ||  q) + D_f(p_2 || q)}{2} \geq  D_f &\left(\frac{1-|p_1-p_2|_{TV}}{2}\bigg| \bigg|\frac 1 2 \right) 
            \\
            &+ \frac \kappa 2 
        \left( (1 + |p_1 - p_2|_{TV})^2 \chi^2(\rho_1||q) + (1 - |p_1 - p_2|_{TV})^2 \chi^2(\rho_2 || q) \right) 
    \end{align}
\end{cor}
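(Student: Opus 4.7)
The plan is to derive Corollary 4.4 as a direct specialization of Theorem 4.3 to the uniform prior $\lambda_i = 1/n$, and then to reduce further to $n = 2$. No new inequality is needed; the content is entirely in simplifying the constants that appear once $\lambda_T$ becomes deterministic.

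First I would observe that when $\lambda_i = 1/n$, the function $x \mapsto \lambda_{T(x)}$ is the constant $1/n$, so
$Q = 1 - \int \lambda_T q\, d\mu = (n-1)/n$,
which is already the second argument of $D_f$ in the corollary. The constancy of $\lambda_T$ also makes both auxiliary densities collapse to $q$: namely
$q_1 = \lambda_T q/(1-Q) = q$ and $q_2 = (1-\lambda_T)q/Q = q$. Consequently the $\chi^2$-terms in Theorem 4.3 may be written as $\chi^2(\rho_1\|q)$ and $\chi^2(\rho_2\|q)$, as in the statement.

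Next I would substitute $1-Q = 1/n$ and $Q = (n-1)/n$ into the coefficients $(1-R)^2/(1-Q)$ and $R^2/Q$ in the definition of $W$ from Theorem 4.3, and combine these with the $W_0$ term. This is pure arithmetic bookkeeping and delivers the first assertion of the corollary. For the specialization to $n = 2$, I would use two simplifications. First, $W_0 \geq 0$ vanishes: its integrand is the variance of $p_i/q$ with respect to the normalized weights $\lambda_i/(1-\lambda_T)$ taken over indices $i \neq T$, and for $n = 2$ there is only one such index, forcing this variance to be identically zero. Second, the earlier observation $2R = 1 - |p_1 - p_2|_{TV}$ (from the $n=2$, $\lambda_1 = \lambda_2 = 1/2$ proposition) translates $R$ and $1-R$ into the total variation expressions $(1 \mp |p_1 - p_2|_{TV})/2$, and $Q = 1/2$ gives the second argument of $D_f$. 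Plugging these substitutions into the general corollary yields the displayed inequality.

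The only mild obstacle is consistent bookkeeping of the coefficients: one must keep careful track of how the factors $(1-Q)^{-1}$ and $Q^{-1}$ inside $W$ combine with the outer powers $(1-R)^2$ and $R^2$, and then with the further $(1 \pm |p_1-p_2|_{TV})$ expansions in the $n=2$ step. Beyond that, both statements of the corollary are immediate specializations of Theorem 4.3 combined with the explicit Bayes-risk formula of Proposition 4.2.
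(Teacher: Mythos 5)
Your proposal is essentially the paper's own proof: specialize Theorem \ref{thm: the theorem2} to $\lambda_i = 1/n$, observe that $\lambda_T \equiv 1/n$ forces $q_1 = q_2 = q$ and $Q = (n-1)/n$, and for $n=2$ use $2R = 1 - |p_1 - p_2|_{TV}$ together with the vanishing of $W_0$ (which you make explicit and the paper leaves implicit). One bookkeeping caveat: plugging $1-Q = 1/n$ and $Q = (n-1)/n$ into the coefficients $(1-R)^2/(1-Q)$ and $R^2/Q$ exactly as they appear in the \emph{statement} of Theorem \ref{thm: the theorem2} yields $n(1-R)^2$ and $\tfrac{n}{n-1}R^2$, not the $n^2(1-R)^2$ and $\left(\tfrac{nR}{n-1}\right)^2$ claimed in the corollary; to land on the corollary's constants you must instead use the squared ratios $W_1 = \left(\tfrac{1-R}{1-Q}\right)^2 \chi^2(\rho_1||q_1)$ and $W_2 = \left(\tfrac{R}{Q}\right)^2 \chi^2(\rho_2||q_2)$ from the end of the proof of Theorem \ref{thm: the theorem2}, with which the paper's own derivation of the corollary is consistent. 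So your route is the intended one, but the step you describe as ``pure arithmetic bookkeeping'' silently depends on resolving this discrepancy between the theorem's statement and its proof.
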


\begin{proof}
  Note that $q_1 = q_2 = q$, since $\lambda_i = \frac 1 n$ implies $\lambda_T = \frac 1 n$ as well. Also, $Q = 1 - \int \lambda_T q d\mu = \frac{n-1} n$ so that applying
  Theorem \ref{thm: the theorem2} gives,
\begin{align}
   \sum_{i=1}^n D_f( p_i || q) \geq n D_f(R||(n-1)/n)  + \frac{\kappa n W(\lambda_i,p_i,q)}{2}.
\end{align}
The term $W$ can be simplified as well.  In the notation of the proof of Theorem \ref{thm: the theorem2},
\begin{align}
    W_1 &= n^2 (1-R)^2 \chi^2(\rho_1, q)
        \\
    W_2 &= \left( \frac{n R}{n-1} \right)^2 \chi^2(\rho_2 || q)
        \\
    W_0 &= \int \frac{ \frac{1}{n-1} \sum_{i \neq T} (p_i - \frac{1}{n-1}\sum_{j \neq T} p_j )^2}{q} d \mu.
\end{align}
For the special case one needs only to recall $R = \frac{ 1 - |p_1 - p_2|_{TV}}{2}$ while inserting $2$ for $n$.
\end{proof}

\begin{cor} \label{cor: relative entropy example}
When $p_i \leq q /t^*$ for $t^*>0$, and $f(x) = x \log x$
\[
\sum_i \lambda_i D( p_i || q) \geq D(R||Q)  + \frac{t^* W(\lambda_i,p_i,q)} 2
\]
for $D( p_i ||q)$ the relative entropy.  In particular, 
\[
    \sum_i \lambda_i D( p_i || q) \geq  D(p||q) + D(R|| P) + \frac{t^* W(\lambda_i,p_i,p)} 2  
\] 
where $P = 1 - \int \lambda_T p d\mu$ for $p = \sum_i \lambda_i p_i$ and $t^* = \min \lambda_i$.
\end{cor}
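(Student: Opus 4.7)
The strategy is to apply Theorem \ref{thm: the theorem2} with $f(x) = x \log x$, exploiting the fact that this $f$ is strongly convex on every bounded sub-interval of $(0,\infty)$. Computing $f''(x) = 1/x$, the hypothesis $p_i \leq q/t^*$ forces $p_i(x)/q(x) \in (0, 1/t^*]$ for all $i$ and all $x$, hence $f'' \geq t^*$ on the range of all Radon--Nikodym derivatives that appear. Thus $f$ is $t^*$-convex on $(\inf_{i,x} p_i/q,\ \sup_{i,x} p_i/q)$, and a direct application of Theorem \ref{thm: the theorem2} with $\kappa = t^*$ yields
\[
    \sum_i \lambda_i D(p_i \| q) \geq D(R \| Q) + \frac{t^* W(\lambda_i, p_i, q)}{2},
\]
which is the first conclusion.

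For the ``in particular'' statement, the key is a chain-rule identity that re-references the bound to the barycenter $p = \sum_j \lambda_j p_j$. Expanding $\log(p_i/q) = \log(p_i/p) + \log(p/q)$, integrating against $p_i$, and summing with weights $\lambda_i$ gives
\[
    \sum_i \lambda_i D(p_i \| q) = D(p \| q) + \sum_i \lambda_i D(p_i \| p),
\]
since $\sum_i \lambda_i p_i = p$ converts the cross-term into $\int p \log(p/q)\, d\mu = D(p\|q)$.

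Next I would apply the first inequality of this corollary with $q$ replaced by $p$. The hypothesis transfers because $p = \sum_j \lambda_j p_j \geq \lambda_i p_i$ implies $p_i \leq p/\lambda_i \leq p/\min_j \lambda_j = p/t^*$, where now $t^* = \min_j \lambda_j$. Moreover, the quantity $R$ and the estimator $T$ depend only on $\{\lambda_i, p_i\}$ and are unchanged; only $Q$ is replaced by $P = 1 - \int \lambda_T p\, d\mu$. This yields
\[
    \sum_i \lambda_i D(p_i \| p) \geq D(R \| P) + \frac{t^* W(\lambda_i, p_i, p)}{2},
\]
and adding $D(p \| q)$ to both sides, in view of the chain-rule identity above, produces the second claim.

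The main obstacle is bookkeeping rather than content: one must verify that the $\kappa$-convexity constant really is $t^*$ on the full range of the relevant likelihood ratios, and that when passing from $q$ to $p$ as reference, the hypothesis $p_i \leq p/t^*$ holds precisely for $t^* = \min_i \lambda_i$, so that the constants in the two displayed inequalities match.
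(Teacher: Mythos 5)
Your proof is correct and follows essentially the same route as the paper: apply Theorem \ref{thm: the theorem2} with $\kappa = t^*$ (since $f''(x)=1/x \geq t^*$ on the range of the likelihood ratios forced by $p_i \leq q/t^*$), then for the second claim invoke the compensation identity $\sum_i \lambda_i D(p_i\|q) = D(p\|q) + \sum_i \lambda_i D(p_i\|p)$ and reapply the first bound with $p$ as reference. Your explicit verification that the hypothesis transfers (namely $p_i \leq p/\min_j \lambda_j$) is a detail the paper leaves implicit, but the argument is the same.
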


\begin{proof}
For the relative entropy, $f(x) = x \log x$ is $\frac 1 M$-convex on $[0,M]$ since $f''(x) = 1/x$.  When $p_i \leq q / t^*$ holds for all $i$ then we can apply Theorem \ref{thm: the theorem2} with $M = \frac 1 {t^*}$.
For the second inequality, recall the
 compensation identity, $\sum_i \lambda_i D(p_i ||q) = \sum_i \lambda_i D(p_i||p) + D(p||q)$, and apply 
 the first inequality to $\sum_i D(p_i ||p)$ for the result. 
\end{proof}
This gives an upper bound on the Jensen-Shannon divergence, defined as $JSD( \mu|| \nu) = \frac 1 2 D(\mu||\mu/2 + \nu/2) + \frac 1 2 D(\nu || \mu/2 + \nu/2)$.  
Let us also note that through the compensation identity $\sum_i \lambda_i D(p_i||q) = \sum_i \lambda_i D(p_i||p ) + D(p||q)$, $\sum_i \lambda_i D(p_i|| q) \geq \sum_i \lambda_i D(p_i||p)$ where $p = \sum_i \lambda_i p_i$.  In the case that $\lambda_i = \frac 1 N$ 
\begin{align}
	\sum_i \lambda_i &D( p_i || q) 
		\\
		&\geq
			\sum_i \lambda_i D(p_i || p )
				\\
		&\geq
			Q f \left( \frac{1-R}{Q} \right) + (1-Q) f\left( \frac{R}{1-Q} \right) + \frac{t^* W} 2	
\end{align}

\begin{cor} \label{cor: Jensen-Shannon lower bound}
    For two densities $p_1$ and $p_2$, The Jensen-Shannon Divergence satisfies the following,
    \begin{align}
        JSD(p_1||p_2) 
            \geq 
                D &\left(\frac{1 + |p_1 - p_2|_{TV}}{2} \bigg|\bigg| 1/2\right) \\
                    &+ \frac 1 4 \left( (1 + |p_1 - p_2|_{TV})^2 \chi^2(\rho_1||p) + (1 - |p_1 - p_2|_{TV})^2 \chi^2(\rho_2 || p) \right) 
    \end{align}
    with $\rho(i)$ defined above and $p = p_1/2 + p_2/2$.
\end{cor}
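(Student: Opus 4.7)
The plan is to apply Corollary \ref{cor: Guntuboyina for balanced lambda} in its special $n=2$ form, taking $\lambda_1=\lambda_2=1/2$ and, crucially, setting the reference density $q$ equal to the barycenter $p = p_1/2 + p_2/2$. With this choice, the left-hand side of the corollary becomes exactly
\[
    \frac{D(p_1 \| p) + D(p_2 \| p)}{2} = JSD(p_1\|p_2),
\]
by the definition of the Jensen-Shannon divergence recalled in the paragraph preceding this corollary.

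Next I would determine the correct value of the convexity constant $\kappa$. Since $p_i \leq p_1 + p_2 = 2p$ pointwise, the Radon--Nikodym derivative $p_i/p$ is bounded above by $2$, so the relevant domain is $(0,2]$. For $f(x) = x\log x$ we have $f''(x) = 1/x \geq 1/2$ on $(0,2]$, giving $\kappa = 1/2$ (consistent with the table: $\kappa = 1/M$ on $(0,M]$ with $M=2$). Substituting $\kappa/2 = 1/4$ into the corollary produces the quadratic term on the right-hand side of the claim, with the $\chi^2$ factors involving $\rho_1, \rho_2$ and the barycenter $p$ as written.

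Finally, I need to reconcile the form of the leading $D_f$ term: the corollary yields $D\bigl(\tfrac{1-|p_1-p_2|_{TV}}{2}\bigm\|\tfrac12\bigr)$, while the statement has $D\bigl(\tfrac{1+|p_1-p_2|_{TV}}{2}\bigm\|\tfrac12\bigr)$. These are equal because the binary relative entropy with second argument $1/2$ is symmetric about $1/2$: for any $a \in [0,1]$,
\[
    D(a\|1/2) = \tfrac12 f(2a) + \tfrac12 f(2(1-a)) = D(1-a\|1/2).
\]
Applying this identity with $a = (1-|p_1-p_2|_{TV})/2$ converts the expression into the stated form and completes the derivation.

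The argument is essentially a direct specialization, so there is no serious obstacle; the only point requiring care is verifying that the sharp $\kappa = 1/2$ is legitimate on the actual range $(0,2]$ of the relevant densities (not on all of $(0,\infty)$, where $t \log t$ has no uniform convexity constant), and that the symmetry of $D(\cdot\|1/2)$ justifies replacing $(1-|p_1-p_2|_{TV})/2$ by $(1+|p_1-p_2|_{TV})/2$ in the boundary term.
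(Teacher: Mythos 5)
Your proof is correct and follows essentially the same route as the paper's: specializing the $n=2$ case of Corollary \ref{cor: Guntuboyina for balanced lambda} with $q=p=(p_1+p_2)/2$ and $\kappa=1/2$ (as justified by Corollary \ref{cor: relative entropy example} with $t^*=1/2$). Your explicit verification that $\kappa=1/2$ is valid on the actual range $(0,2]$ of $p_i/p$, and the observation that $D(a\|1/2)=D(1-a\|1/2)$ reconciles the $(1\pm|p_1-p_2|_{TV})/2$ forms, are details the paper leaves implicit but are exactly right.
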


\begin{proof}
By Corollary \ref{cor: relative entropy example}, insert $\kappa = 1/2$ into the $n=2$ example of Corollary \ref{cor: Guntuboyina for balanced lambda}.
\end{proof}
Note that $ 2 D((1+V)/2||1/2) = (1+V)\log(1+V) + (1-V) \log (1-V) \geq  V^2 \log e$, we see that a further bound,
\begin{align}   
    JSD(p_1||p_2) 
            \geq 
               \frac{\log e }{2} V^2
                    +  \frac{(1 + V)^2 \chi^2(\rho_1||p) + (1 - V)^2 \chi^2(\rho_2 || p)}{4},
\end{align}
can be obtained for $V = |p_1 - p_2|_{TV}$.







 

 \subsection{On Topsoe's sharpening of Pinsker's inequality}
 
 For $P_i, Q$ probability measures with densities $p_i$ and $q$ with respect to a common reference measure, $\sum_{i=1}^n t_i = 1$, with $t_i >0$, denote $P = \sum_i t_i P_i$, with density
 $p = \sum_i t_i p_i$, the compensation identity is
 \begin{align} \label{eq: Compensation identity}
     \sum_{i=1}^n t_i D(P_i || Q) = D(P||Q) + \sum_{i=1}^n t_i D(P_i || P).
 \end{align}

 \begin{thm}
     For $P_1$ and $P_2$, denote $M_k = 2^{-k} P_1 + (1- 2^{-k}) P_2$, and define 
     \begin{align*}
         \mathcal{M}_1(k) = \frac{M_k \mathbbm{1}_{\{P_1 > P_2\}} + P_2 \mathbbm{1}_{\{P_1 \leq P_2 \}}}{M_k \{P_1 > P_2\} + P_2 \{P_1 \leq P_2 \}} \hspace{10mm}  \mathcal{M}_2(k) = \frac{M_k \mathbbm{1}_{\{P_1 \leq P_2\}} + P_2 \mathbbm{1}_{\{P_1 >  P_2 \}}}{M_k \{P_1 \leq P_2\} + P_2 \{P_1 > P_2 \}},
     \end{align*}
     then the following sharpening of Pinsker's inequality can be derived,
     \begin{align*}
         D(P_1||P_2) \geq  (2  \log e) |P_1 - P_2 |^2_{TV}  +  \sum_{k=0}^\infty 2^k  \left(\frac{\chi^2(\mathcal{M}_1(k), M_{k+1})}{2} + \frac{\chi^2(\mathcal{M}_2(k), M_{k+1})}{2} \right).
     \end{align*}
 \end{thm}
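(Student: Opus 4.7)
I follow Topsoe's iterative compensation technique, interleaved with the sharpened Jensen--Shannon bound of Corollary \ref{cor: Jensen-Shannon lower bound}. Throughout set $V = |P_1 - P_2|_{TV}$ and $V_k = 2^{-k}V$.

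The starting observation is the recursion $M_{k+1} = \tfrac{1}{2}M_k + \tfrac{1}{2}P_2$, which makes $M_{k+1}$ the midpoint of $M_k$ and $P_2$. Applying the compensation identity \eqref{eq: Compensation identity} to the pair $(M_k, P_2)$ with weights $(\tfrac{1}{2}, \tfrac{1}{2})$ and reference $Q = P_2$ gives the recurrence
\begin{align*}
D(M_k || P_2) = 2\, D(M_{k+1} || P_2) + 2\, JSD(M_k || P_2).
\end{align*}
Iterating this $N$ times and dropping the nonnegative term $2^N D(M_N || P_2)$ as $N \to \infty$ produces
\begin{align*}
D(P_1 || P_2) \geq \sum_{k=0}^\infty 2^{k+1}\, JSD(M_k || P_2).
\end{align*}

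I then apply Corollary \ref{cor: Jensen-Shannon lower bound} to each summand. The key identification is that $M_k - P_2 = 2^{-k}(P_1 - P_2)$ implies $|M_k - P_2|_{TV} = V_k$ and $m_k > p_2 \iff p_1 > p_2$, so the Bayes classifier for $(M_k, P_2)$ with uniform prior coincides with the one for $(P_1, P_2)$. Consequently the auxiliary densities $\rho_1, \rho_2$ produced by the corollary are exactly $\mathcal{M}_1(k)$ and $\mathcal{M}_2(k)$, and the midpoint density is $M_{k+1}$. The corollary therefore yields
\begin{align*}
JSD(M_k || P_2) \geq D((1+V_k)/2 || 1/2) + \tfrac{1}{4}\bigl[(1+V_k)^2 \chi^2(\mathcal{M}_1(k) || M_{k+1}) + (1-V_k)^2 \chi^2(\mathcal{M}_2(k) || M_{k+1})\bigr].
\end{align*}

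Summing with weights $2^{k+1}$ yields the two parts of the claim. For the binary relative entropy piece, the Pinsker inequality $D((1+u)/2 || 1/2) \geq \tfrac{u^2 \log e}{2}$ applied termwise combined with the geometric sum $\sum_{k\geq 0} 2^{-k} = 2$ produces exactly $2 \log e\, V^2$. For the $\chi^2$ piece, the $(1 \pm V_k)^2$ prefactors must be rebalanced so that each $\chi^2(\mathcal{M}_i(k) || M_{k+1})$ appears with the stated coefficient $2^k/2$. The main obstacle is this last rearrangement: the natural tool is the identity
\begin{align*}
(1+V_k)^2\, \chi^2(\mathcal{M}_1(k) || M_{k+1}) = (1-V_k)^2\, \chi^2(\mathcal{M}_2(k) || M_{k+1}),
\end{align*}
which follows from $\max(m_k, p_2) - (1+V_k)m_{k+1} = -[\min(m_k, p_2) - (1-V_k)m_{k+1}]$ together with $\max + \min = m_k + p_2 = 2 m_{k+1}$. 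Careful bookkeeping via this identity is what converts the asymmetrically weighted sum coming out of the corollary into the symmetric form of the theorem, and is the step most prone to constant-chasing errors.
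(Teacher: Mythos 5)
Your skeleton is the same as the paper's: the compensation identity applied to the pair $(M_k,P_2)$ with equal weights and reference measure $P_2$, iterated to give $D(P_1||P_2)\ \geq\ \sum_{k\geq 0}2^{k+1}\,JSD(M_k||P_2)$, followed by Corollary \ref{cor: Jensen-Shannon lower bound} applied to each summand. The identification $\rho_1=\mathcal{M}_1(k)$, $\rho_2=\mathcal{M}_2(k)$, $p=M_{k+1}$ is correct, the binary relative-entropy terms do sum to $2\log e\,V^2$ exactly as you compute, and your identity
\begin{align*}
(1+V_k)^2\,\chi^2(\mathcal{M}_1(k)||M_{k+1})=(1-V_k)^2\,\chi^2(\mathcal{M}_2(k)||M_{k+1})
\end{align*}
is genuinely true, for the reason you give: the numerators $\max(m_k,p_2)-(1+V_k)m_{k+1}$ and $\min(m_k,p_2)-(1-V_k)m_{k+1}$ are negatives of each other because $\max+\min=2m_{k+1}$.

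The gap is the final ``rebalancing,'' which you defer to careful bookkeeping: it cannot be carried out, and your own identity is what shows this. Write $c_k$ for the common value $(1+V_k)^2\chi^2(\mathcal{M}_1(k)||M_{k+1})=(1-V_k)^2\chi^2(\mathcal{M}_2(k)||M_{k+1})$. The corollary delivers the weighted sum $(1+V_k)^2\chi^2(\mathcal{M}_1(k)||M_{k+1})+(1-V_k)^2\chi^2(\mathcal{M}_2(k)||M_{k+1})=2c_k$, whereas the theorem's right-hand side requires the unweighted sum
\begin{align*}
\chi^2(\mathcal{M}_1(k)||M_{k+1})+\chi^2(\mathcal{M}_2(k)||M_{k+1})
=c_k\left(\frac{1}{(1+V_k)^2}+\frac{1}{(1-V_k)^2}\right)
=c_k\cdot\frac{2(1+V_k^2)}{(1-V_k^2)^2}\ \geq\ 2c_k,
\end{align*}
with strict inequality whenever $V_k>0$ and $c_k>0$ (the last inequality is $1+V_k^2\geq(1-V_k^2)^2$, i.e.\ $V_k^2\leq 3$). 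So the quantity you can prove is a lower bound for $JSD(M_k||P_2)$ is \emph{smaller} than the quantity the theorem asserts, and the needed direction $(1+V_k)^2\chi^2_1+(1-V_k)^2\chi^2_2\geq\chi^2_1+\chi^2_2$ fails. To be fair, the paper's own proof commits the same slip silently, simply deleting the factors $(1\pm V_k)^2$, which is legitimate only for the $(1+V_k)^2$ term; what this line of argument actually establishes is $D(P_1||P_2)\geq 2\log e\,V^2+\sum_{k}2^k(1-V_k)^2\chi^2(\mathcal{M}_2(k)||M_{k+1})$, equivalently with $(1+V_k)^2\chi^2(\mathcal{M}_1(k)||M_{k+1})$ inside the sum. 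You have therefore isolated exactly the step that needs work, but the proposed fix runs in the wrong direction, and as it stands the proposal does not establish the theorem with the stated coefficients.
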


\begin{proof}
When $n = 2$ and $t_1 = t_2= \frac 1 2$ if we denote $M = \frac{P_1 + P_2}{2}$ then \eqref{eq: Compensation identity} reads as
 \begin{align}
     \frac 1 2 D(P_1 || Q) + \frac 1 2 D(P_2 || Q) = D(M || Q) + JSD(P_1||P_2).
 \end{align}
 Taking $Q = P_2$ we arrive at
 \begin{align}
     D(P_1 || P_2) = 2 D(M || P_2) + 2 JSD(P_1 ||P_2) 
 \end{align}
 Iterating, and writing $M_k = 2^{-k} P_1 + (1- 2^{-k})P_2$, we have
 \begin{align}
     D(P_1 || P_2) = 2^n \left(D(M_n || P_2) + 2 \sum_{k=0}^n JSD(M_n ||P_2) \right)
 \end{align}
 It can be shown (see \cite{topsoe2000some}) that $2^n D(M_n ||P_2) \to 0$ with $n \to \infty$, giving the following series representation,
 \begin{align}
     D(P_1 || P_2) = 2 \sum_{k=0}^\infty 2^k JSD(M_k || P_2).
 \end{align}
Note that the $\rho$-decomposition of $M_k$ is exactly $\rho_i = \mathcal{M}_k(i)$ thus by Corollary \ref{cor: Jensen-Shannon lower bound},
 \begin{align}
     D(P_1 || P_2) 
        &= 
            2 \sum_{k=0}^\infty 2^k JSD(M_k || P_2)
                \\
        &\geq
            \sum_{k=0}^\infty 2^k \left( | M_k - P_2 |^2_{TV} \log e  + \frac{\chi^2(\mathcal{M}_1(k), M_{k+1})}{2} + \frac{\chi^2(\mathcal{M}_2(k), M_{k+1})}{2} \right)
                \\
        &=
        (2  \log e)|P_1 - P_2 |^2_{TV} +  \sum_{k=0}^\infty 2^k  \left(\frac{\chi^2(\mathcal{M}_1(k), M_{k+1})}{2} + \frac{\chi^2(\mathcal{M}_2(k), M_{k+1})}{2} \right).
 \end{align}
 Thus we arrive at the desired sharpening of Pinsker's inequality.
 \end{proof}
 
 Observe that the $k =0$ term in the above series is equivalent to
\begin{align} 
    2^0  \left(\frac{\chi^2(\mathcal{M}_1(0), M_{0+1})}{2} + \frac{\chi^2(\mathcal{M}_2(0), M_{0+1})}{2} \right)
        =
        \frac{\chi^2(\rho_1, p)}{2} + \frac{\chi^2(\rho_2, p)}{2},
\end{align}
where $\rho_i$ is the convex decomposition of $p = \frac{p_1 + p_2}{2}$ in terms of $T(x) = \arg \max \{p_1(x), p_2(x)\}$.

 \pagebreak
 
\appendix
\section{Appendix}

    

\begin{thm}  \label{thm: skewing preserves f divergence}
    The class of $f$-divergences is stable under skewing.  That is, if $f$ is convex, satisfying $f(1) =0$, then
    \begin{align}
        \hat{f}(x) \coloneqq (tx + (1-t))f\left(\frac {rx +(1-r)} {tx + (1-t)} \right)
    \end{align}
    is convex with $\hat{f}(1)=0$ as well. 
\end{thm}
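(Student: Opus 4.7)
The plan is to identify $\hat f$ as the composition of the perspective function of $f$ with an affine map, reducing the claim to two classical facts: the perspective of a convex function is convex, and convexity is preserved under pre-composition with affine maps.

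First, the condition $\hat f(1) = 0$ is immediate: evaluating at $x = 1$ gives $tx + (1-t) = 1$ and $rx + (1-r) = 1$, so $\hat f(1) = 1 \cdot f(1) = 0$. This takes no work.

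For convexity, I would introduce the perspective $g(u,v) \coloneqq v f(u/v)$, defined on $\{(u,v) : v > 0, u \geq 0\}$, extended at $v=0$ by the limits $0 f(a/0) = a \lim_{s \to \infty} f(s)/s$ from the Notation section. The classical fact is that $g$ is jointly convex; this can be verified cleanly by the direct one-line argument that for $(u_i, v_i)$ and $\lambda \in [0,1]$, writing $u = \lambda u_1 + (1-\lambda) u_2$, $v = \lambda v_1 + (1-\lambda) v_2$, and $\mu = \lambda v_1/v \in [0,1]$, one has $u/v = \mu (u_1/v_1) + (1-\mu)(u_2/v_2)$, so convexity of $f$ yields
\[
    v f(u/v) \leq \mu v \, f(u_1/v_1) + (1-\mu) v \, f(u_2/v_2) = \lambda v_1 f(u_1/v_1) + (1-\lambda) v_2 f(u_2/v_2).
\]
The map $x \mapsto (rx + (1-r), \, tx + (1-t))$ is affine, and its second coordinate is strictly positive on $(0,\infty)$ for $t \in [0,1]$, so the composition $\hat f(x) = g(rx + (1-r), tx + (1-t))$ stays in the domain of $g$. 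Since precomposition of a convex function with an affine map is convex, $\hat f$ is convex on $(0,\infty)$, completing the proof.

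I do not expect serious obstacles; the one mild subtlety is verifying that the affine map lands inside the domain of $g$ (this uses $t, r \in [0,1]$ together with $x > 0$) and, if one wants $\hat f$ defined at boundary values $x = 0$ or $x = \infty$, confirming that the standard conventions $0 f(a/0) = a \lim f(s)/s$ and $f(0) = \lim_{s \downarrow 0} f(s)$ from the Notation section extend convexity by continuity to the closure. Everything else is bookkeeping.
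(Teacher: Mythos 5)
Your proposal is correct and follows essentially the same route as the paper: both identify $\hat f$ as the perspective (conic transform) $g(u,v)=vf(u/v)$ of $f$ precomposed with the affine map $x\mapsto(rx+(1-r),\,tx+(1-t))$, and both verify joint convexity of $g$ by the same direct rescaling-of-weights computation. No gaps.
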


\begin{proof}
    If $\mu$ and $\nu$ have respective densities $u$ and $v$ with respect to a reference measure $\gamma$, then $r \mu + (1-r) \nu$ and $t \mu + 1-t \nu$ have densities $r u + (1-r) v$ and $t u + (1-t)v$
    \begin{align}
        S_{f,r,t}( \mu || \nu) 
            &=
                \int f\left(\frac{ r u +(1-r)v}{t u + (1-t) v} \right) (t u + (1-t) v) d\gamma
                    \\
            &=
                \int f \left( \frac{r \frac u v + (1-r) }{t \frac u v + (1-t)} \right) (t \frac u v + (1-t)) v d \gamma
                    \\
            &=
                \int \hat{f}\left( \frac u v \right) v d\gamma.
    \end{align}
    Since $\hat{f}(1) = f(1) = 0$, we need only prove $\hat{f}$ convex.  For this, recall that the conic transform $g$ of a convex function $f$ defined by $g(x,y) = yf(x/y)$ for $y >0$ is convex, since 
    \begin{align}
        \frac{y_1+y_2}{2} f\left(\frac{x_1 + x_2 }{2}/ \frac{y_1 + y_2}{2} \right) 
            &= 
                \frac{y_1+y_2}{2} f\left(\frac{y_1}{y_1+y_2} \frac{x_1}{y_1} + \frac{y_2}{y_1+y_2} \frac{x_2}{y_2} \right) 
                    \\
            &\leq \frac{y_1}{2} f(x_1/y_1) + \frac{y_2}{2} f(x_2/y_2).
    \end{align}
   Our result follows since $\hat{f}$ is the composition of the affine function $A(x) = (rx + (1-r), tx + (1-t))$ with the conic transform of $f$,
    \begin{align}
        \hat{f}(x) = g(A(x)).
    \end{align}
\end{proof}

\bibliographystyle{plain}
\bibliography{bibibi}
\end{document}